\tikzstyle{startstop} = [rectangle, rounded corners, 
\tikzstyle{io} = [trapezium, 
\tikzstyle{process} = [rectangle, 
\tikzstyle{decision} = [diamond, 
\tikzstyle{arrow} = [ultra thick,->,>=stealth]
\newtheorem{thm}{Theorem}
\newtheorem{cor}[thm]{Corollary}
\newtheorem{lem}{Lemma}
\newtheorem{defn}{Definition}
\newtheorem{rem}{Remark}
\newtheorem{prob}{Problem}
\begin{document}
\title{Topology Reconstruction of a Class of Electrical Networks with Limited Boundary Measurements}
\author{Shivanagouda Biradar \& Deepak U Patil
        % <-this % stops a space
\thanks{S. Biradar \& D. U. Patil are with the Department of
	Electrical Engineering, Indian Institute of Technology Delhi, India. e-mail: shivmb59@gmail.com, deepakpatil@ee.iitd.ac.in}% <-this % stops a space
\thanks{}}
% The paper headers
\markboth{Journal of \LaTeX\ Class Files,~Vol.~14, No.~8, August~2021}%
{Shell \MakeLowercase{\textit{et al.}}: A Sample Article Using IEEEtran.cls for IEEE Journals}

%\IEEEpubid{0000--0000/00\$00.00~\copyright~2021 IEEE}
% Remember, if you use this you must call \IEEEpubidadjcol in the second
% column for its text to clear the IEEEpubid mark.
\maketitle
\begin{abstract}
	We consider the problem of recovering the topology and the edge conductance value, as well as characterizing a set of electrical networks that satisfy the limitedly available Thevenin impedance measurements. The measurements are obtained from an unknown electrical network, which is assumed to belong to a class of circular planar passive electrical network. This class of electrical networks consists of R, RL, and RC networks whose edge impedance values are equal, and the absolute value of the real and the imaginary part of the edge impedances are also equal. To solve the topology reconstruction and the set characterization problem, we establish a simple relation between Thevenin impedance and the Laplacian matrix and leverage this relation to get a system of multivariate polynomial equations, whose solution is a set of all electrical networks satisfying the limited available Thevenin's impedance measurements. To confine the search space and generate valid electrical networks, we impose the triangle and Kalmanson's inequality as constraints. The solution to a constrained system of multivariate polynomial equations is a set of reconstructed valid electrical networks. For simple algorithmic solutions, we use Gr\"obner basis polynomials. This paper shows that the triangle and the Kalmanson's inequality holds for general circular planar passive R, RL, and RC electrical networks if certain boundary conditions lie within a convex cone. Numerical examples illustrate the developed topology reconstruction method.
\end{abstract}
\begin{IEEEkeywords}
	Topology reconstruction, Gr\"{o}bner basis, electrical network.
\end{IEEEkeywords}
\section{Introduction}
\label{sec:introduction}
Electrical networks are an integral part of our lives. They are used in innumerable ways to make our lives easier. Resistor networks are used in geology to model fractures in crystalline rock \cite{jirkuu2019resistor}. Electrical resistivity studies of carbon composites are carried out by modeling them as a 3D resistor network model \cite{lundstrom2022resistor}. Water distribution networks are modeled as resistor networks for fault detection \cite{veldman2015towards}. Resistor networks are also used in applications like soft robotics sensor array \cite{zhao2023measuring}, modeling graphene sheet and carbon nanotube \cite{cheianov2007random}. Similarly, resistor-inductor and resistor-capacitor networks are used to model and study elastic arterial vessels \cite{silva2019schsim}, piezoelectric based vibration absorber \cite{zhao2015dynamic}, rock failure process under high-voltage electropulse \cite{rao2022mechanism}, ceramic coaxial resonator \cite{nguenouho2017ceramic}. For studying and analyzing such critical systems, the network structure and the corresponding edge conductances must be known apriori, which is usually unavailable in many practical cases. Therefore, topology reconstruction is essential. The objective of topology reconstruction for electrical networks is $1)$ to find the unknown network structure and $2)$ to compute the edge conductance values using available electrical measurements.

The topology reconstruction problem for electrical networks is a widely researched topic. In \cite{curtis1994finding}, authors consider a class of resistor networks denoted as $C(m,n)$, $m$ is the number of circles arranged concentrically, and $n$ represents the number of rays emanating from $n$ boundary nodes positioned on the outermost circle. Therefore, the structure is presumed to be known. Additionally, it is assumed that all boundary terminals are accessible for taking measurements. Next, the unknown edge resistance values are calculated using the boundary measurement. In \cite{curtis1990determining}, authors assume that networks take a general rectangular mesh-type structure, with boundary nodes available for boundary measurements. The authors propose an algorithm that uses a gamma harmonic function $\left(\mbox{based on Kirchhoff's law}\right)$ to compute the edge resistance values. A similar problem of network topology reconstruction has been studied widely in phylogenetics, wherein genetic distance measure, akin to resistance distance, is used to reconstruct the phylogenetic network \cite{forcey2020phylogenetic}. The authors in \cite{calzavara2021structured} examine the reconstruction of a class of RC network utilizing input-output data. The input-output data is first used to identify a state space model, and then a transformation is computed that reveals the RC network structure from the state space model. 

\noindent In our work in \cite{biradar2023topology}, we consider an unknown general circular planar resistor network to be reconstructed, whose response matrix is assumed to be known. We develop an algorithm that uses the response matrix to enumerate all electrical networks satisfying the response matrix. In \cite{biradar2022topology}, we consider the problem of characterizing a set of resistive networks corresponding to partially available resistance distance measurements.

However, the network structure is rarely known apriori, and not all boundary nodes are available for conducting experiments and collecting measurements. For example, in a soft resistive sensor array network, there is no prior knowledge of the structural information, and only a few boundary terminals are accessible for collecting boundary measurements. It is also observed that for partial measurements, multiple network configurations are possible, which are not accounted for in general. The knowledge of multiple network configurations for available electrical measurements is essential in studying fault detection and robustness, disturbance propagation, integrated circuit design, circuit debugging, and designing power management strategies in large-scale power networks.

This paper characterize electrical networks that satisfy the partially available  Thevenin's impedance measurements using the Gr\"{o}bner basis. These measurements come from a class of electrical networks, which consists of resistive (R), resistor-inductor (RL), and resistor-capacitor (RC) networks whose edge impedances are equal, and the absolute values of real and imaginary parts are also equal. Such class of networks are commonly found in lattice models of crystal structure in physical and material sciences, a sensor network designed to measure physical quantities like temperature, force, and irradiance, as well as in antenna matching and RL and RC filters. Such resistive networks are also utilized to study the properties of simple, connected, undirected graphs.

Major contributions of this paper are,
\begin{itemize}
	\item In contrast to works in \cite{forcey2020phylogenetic}, \cite{curtis1994finding}, \cite{ghosh2008minimizing}, which assume specific network structures, our approach relaxes these assumptions
	and requires only a broader assumption of circular planarity.  
	
	\item We establish that the triangle inequality and Kalmanson's inequality hold for general circular planar passive RL and RC networks, provided that specific boundary conditions lie within a particular convex cone. Furthermore, we demonstrate that the determinant of appropriately chosen sub-matrices of the Laplacian matrix satisfies the triangle and the Kalmanson's inequality for a class of electrical networks. 
	
	\item For the topology reconstruction and set characterization, we develop an algorithm wherein each Thevenin's impedance measurement is related to an unknown Laplacian matrix, leading to a system of multivariate polynomial equations. The system of polynomial equations, along with the triangle inequality, the \emph{valid} Kalmanson's inequality, and graph connectedness constraint, are solved to obtain a \emph{set of valid electrical networks satisfying the available measurements}.
	
	\item Imposing the valid Kalmanson's inequality necessitates knowledge of specific boundary conditions, which may be inaccessible due to the unavailability of certain boundary terminals. To address this challenge, we develop an algorithm that infers appropriate boundary conditions and selects the corresponding Kalmanson's inequality based on limited boundary measurements. This approach effectively reveals the internal characteristics of the unknown electrical network.
	
	\item Solving a system of high-degree multivariate polynomial equations is computationally demanding. Therefore, to make computations tractable, we use an algebraic geometry-based technique, i.e., \emph{Gr\"{o}bner basis method}.
	
	\item For the class of electrical networks defined above, we establish a relation between the Thevenin's impedance and the corresponding Laplacian matrix. This relation is an extension of result on computing the resistance distance of an undirected graph \cite{babic2002resistance}.
\end{itemize}
\noindent \textit{\textbf{Mathematical Notations:}} The set $\mathbb{Z}$, $\mathbb{Z}^+$, $\mathbb{C}$, $\mathbb{R}^{+}$ are set of integers, positive integers, complex and positive real numbers, respectively. $\mathbf{1}$ is a vector of ones. $\mathcal{L}[i]$ is a submatrix of Laplacian matrix obtained by deleting $i^{th}$ row and $i^{th}$ column, whereas $\mathcal{L}[i,j]$ is a submatrix of Laplacian matrix obtained by deleting $i^{th}$ \& $j^{th}$ row, and $i^{th}$ \& $j^{th}$ column. The submatrix $\mathcal{L}[S,T]$, of the Laplacian matrix $\mathcal{L}$, is obtained by excluding rows and columns indexed by elements in set $S$ and $T$, whereas submatrix $\mathcal{L}(S,T)$ is obtained by including elements from set $S$ and $T$. $|\cdot|$ is the cardinality of set. A set $\{n-i\}=\{1,\ldots,n-i\}, \forall i<n$. Bold, small letters are used for representing vectors, like $\mathbf{v}$, and bold, capital letters are used for representing matrix, for example $\mathbf{A}$. A vector $\mathbf{v}\left(\{n-i\}\right)$ is sub vector of $\mathbf{v}$ obtained by including elements of $\mathbf{v}$ indexed by set $\{n-i\}$. $\mathbf{e}_i$ is a canonical vector whose $i^{th}$ element is $1$ and all other elements are $0$. Then, $\mathbf{e}_{ij}=\mathbf{e}_i-\mathbf{e}_j$. Consider a complex number $z=x+\mathbf{j}y$, $\Re(z)=x$ is the real part, and $\Im(z)=y$ is the imaginary part of the complex number $z$. $\odot$ is a binary operation and represents the element-wise product.
\section{Preliminaries}
\label{prelims}
\subsection{Network Structure}
\label{netstruct}
Consider a circular planar passive electrical network $\Gamma=(\mathcal{G},\gamma)$. A finite, simple, circular planar graph $\mathcal{G}=(\mathcal{V_B}, \mathcal{E})$ is a graph embedded in disc D, on the plane, bounded by a circle C.
\begin{comment}
as shown in Fig.\ref{fig1:unkplanegrap1}.
\begin{figure}[h]
	\centering
	\includegraphics[scale=0.2]{}
	\caption{Unknown circular planar passive electrical network $\Gamma$.}
	\label{fig1:unkplanegrap1}
\end{figure}
\end{comment}
The set $\mathcal{V_B}$, is the set of boundary nodes and $\mathcal{E} \subseteq \mathcal{V_B} \times \mathcal{V_B}$ is the set of edges.  The boundary nodes, labeled from $1$ to $n$, lie on C in clockwise circular order. The boundary nodes say $j,k,l,m$, are said to be in circular order if the ordering $j<k<l<m$ is induced by the angle of arc measured in a clockwise direction from $j$. We partition the set $\mathcal{V_B}$ based on the availability of boundary nodes for voltage and current measurements $(\mbox{boundary measurements})$. Therefore, let $\mathcal{A}$ be the set of boundary nodes available for boundary measurements and $\mathcal{U_B}=\mathcal{V_B} \setminus \mathcal{A}$ be a set of unavailable boundary nodes. The conductance function $\gamma:\mathcal{E} \rightarrow \mathbb{C}$, assigns to each edge $\sigma$ a complex value $\gamma(\sigma)$, known as the conductance of $\sigma$. The impedance of each edge $\sigma$ is $z\left(\sigma\right)=\gamma \left( \sigma \right)^{-1}$.

In this paper, we consider that the network $\Gamma$ is unknown and is to be reconstructed. We assume that this unknown network $\Gamma$ is either a R, RL, or RC network with no voltage source. Also, every edge in an unknown network takes the edge impedance value of the form,
\begin{enumerate*}
	\item $z(\sigma)=\beta$ for R network,   
	\item $z(\sigma)=\beta+\mathbf{j}\beta$, for RL network,  
	\item $z(\sigma)=\beta-\mathbf{j}\beta$, for RC network, 
\end{enumerate*}
$\forall \sigma \in \mathcal{E}$ and $\beta \in \mathbb{R}^+$. Let us label $\Gamma_{\beta}$ as a set of circular planar passive R, RL, and RC electrical networks possessing the above-described properties. In class $\Gamma_\beta$, let $\Gamma_{\beta}^R \subset \Gamma_\beta$ be a set of resistive networks, $\Gamma_{\beta}^{RL}\subset \Gamma_\beta$ is a set of RL network and $\Gamma_{\beta}^{RC}\subset \Gamma_\beta$ is a set of RC network. Let $\Upsilon$ be a collection of all general R, RL, and RC networks, such that $\Gamma_{{\beta}} \subset \Upsilon$.  Let $\Upsilon^{R}$, $\Upsilon^{RL}$ and $\Upsilon^{RC} \subset \Upsilon$ be a set of all general R, RL, and RC networks, respectively.

The relation between the  Thevenin's impedance $z^{th}_{i,j} \in \mathbb{C}$ across boundary nodes $i,j \in \mathcal{V_B}$, and the Laplacian matrix takes a special form, which will be used to construct a set of multivariate polynomials. It will eventually be used in topology reconstruction and set characterization problems. The relation is stated and proved in the next section.
\subsection{ Thevenin's Impedance \& Laplacian Matrix} 
\label{laplace_impedance_relation}
The  Thevenin's impedance, $z^{th}_{i,j} \in \mathbb{C}$, between any two boundary nodes $i,j \in \mathcal{V_B}$, is the effective impedance measured across boundary nodes $i$ and $j$. The effective impedance is the ratio of applied voltage to the response current. To facilitate the topology reconstruction and set characterization, we measure  Thevenin's impedances $z^{th}_{s,t}$ across all the available boundary nodes $s,t \in \mathcal{A}$. Then, a set of available  Thevenin's impedance measurements is a set $Z^{th}=\{z^{th}_{s,t}=v_{s,t}/i_s : \forall \, s,t \in \mathcal{A} \}$. 
The Laplacian matrix $\mathcal{L}(\mathcal{G})$, of a graph $\mathcal{G}$, is in general defined as follows:
\begin{equation}\label{lap_def}
	\left[\mathcal{L}\left(\mathcal{G}\right)\right]_{ij}=\left[\mathcal{L}_{\mathcal{G}}(i,j)\right]\left\{
	\begin{array}{ll}
		= -1, & \mbox{if $ij\in \mathcal{E}$},\\
		= d_i, & \mbox{if $i=j$},\\
		=0, & \mbox{otherwise}.
	\end{array}
	\right.
\end{equation}   
\noindent Where $d_i$ is the degree of node $i$. The Laplacian matrix of any electrical network $\Gamma \in \Gamma_{\beta}$, $\mathcal{L}_\Gamma$, admits the following form, i.e., $\mathcal{L}_\Gamma=\gamma \mathcal{L(G)}$, where $\gamma=\frac{1}{z}$. Since, $\Gamma \in \Gamma_{\beta}$ is unknown, the corresponding Laplacian matrix  $\mathcal{L}_{\Gamma}$ is also unknown. We, therefore, aim to characterize a set of $\mathcal{L}_{\Gamma}$, which satisfies measurements in set $Z^{th}$, thereby also reconstructing the unknown electrical network $\Gamma$. To achieve this, we first establish a simple relation between Thevenin's impedance and the Laplacian matrix $\mathcal{L}_{\Gamma}$, elucidated in Theorem \ref{impedance_dist}.   
\begin{thm}
	\label{impedance_dist}
	Consider a circular planar passive electrical network $\Gamma=\big(\mathcal{G},\gamma \big)$, where $\Gamma \in \Gamma_{\beta}$. Let $\mathcal{G}=\big(\mathcal{V_B},\mathcal{E}\big)$ be a simple, connected graph on $n$ vertices, where $n\ge3$, corresponding to $\Gamma$. Then, the  Thevenin's impedance $z^{th}_{j,k}$ across any pair of boundary nodes $j$ and $k$, $\forall$ $1\le j \ne k \le n_b$, is given as:
	\begin{equation}\label{imped_dist}
		{z^{th}_{j,k}} = \gamma^{-1} \frac{{\det{\mathcal{L_G}\left[ {j,k} \right]} }}{{\det  {\mathcal{L_G}\left[ k,k \right]} }}.
	\end{equation} 
\end{thm}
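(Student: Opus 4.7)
The plan is to derive the identity by standard nodal analysis followed by Cramer's rule, and then to factor out the uniform edge conductance $\gamma$ using the scaling relation $\mathcal{L}_\Gamma = \gamma\,\mathcal{L}(\mathcal{G})$ that characterises the class $\Gamma_\beta$.

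First I would model the Thevenin measurement in phasor form: inject a test current $I$ at boundary node $j$ and extract it at $k$, so the nodal-current vector is $\mathbf{i} = I\,\mathbf{e}_{jk}$, and Kirchhoff's current law combined with Ohm's law in admittance form gives $\mathcal{L}_\Gamma \mathbf{v} = \mathbf{i}$. Because $\mathbf{1} \in \ker \mathcal{L}_\Gamma$, I would ground node $k$ by setting $v_k = 0$; deleting the corresponding row and column then produces the reduced square system $\mathcal{L}_\Gamma[k,k]\,\tilde{\mathbf{v}} = I\,\tilde{\mathbf{e}}$, where $\tilde{\mathbf{v}}$ and $\tilde{\mathbf{e}}$ are the restrictions of $\mathbf{v}$ and $\mathbf{e}_j$ to the indices other than $k$, so that the right-hand side has a single nonzero entry in the reduced position $p$ of node $j$. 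This reduced matrix is invertible because $\mathcal{G}$ is connected: by the matrix-tree theorem $\det \mathcal{L_G}[k,k]$ equals the positive number of spanning trees of $\mathcal{G}$, and scaling by the nonzero complex constant $\gamma$ preserves invertibility.

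Next I would apply Cramer's rule to extract $v_j$ alone. Replacing the $p$-th column of $\mathcal{L}_\Gamma[k,k]$ by $I\,\tilde{\mathbf{e}}$ and expanding along that column leaves only the diagonal cofactor; its sign $(-1)^{p+p} = +1$ is independent of whether $j<k$ or $j>k$, and its associated minor is exactly $\det \mathcal{L}_\Gamma[j,k]$. Hence $v_j = I\,\det \mathcal{L}_\Gamma[j,k]\,/\,\det \mathcal{L}_\Gamma[k,k]$ and $z^{th}_{j,k} = v_j / I$ equals that ratio. Substituting $\mathcal{L}_\Gamma = \gamma\,\mathcal{L}(\mathcal{G})$ and pulling the scalar out of each determinant contributes $\gamma^{n-2}$ from the $(n-2)\times(n-2)$ numerator and $\gamma^{n-1}$ from the $(n-1)\times(n-1)$ denominator, which leaves $\gamma^{-1}$ times the ratio of unweighted-Laplacian determinants --- exactly \eqref{imped_dist}.

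The main point requiring care is that the argument must go through uniformly over $\mathbb{C}$ so that the RL and RC cases are handled by the same derivation: Cramer's rule and the determinant-of-scalar-multiple identity are purely formal operations that extend verbatim, but I would make explicit the nonvanishing $\det \mathcal{L}_\Gamma[k,k] = \gamma^{n-1}\det \mathcal{L_G}[k,k] \neq 0$ so that the grounded nodal system is well posed for complex $\gamma$. Apart from that, the only mildly subtle step is the sign bookkeeping in the cofactor expansion when $j > k$, which needs a brief index shift but still yields a positive sign. Everything else is routine linear algebra, so I expect the full proof to be short once this setup is in place.
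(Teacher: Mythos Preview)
Your argument is correct and in fact cleaner than the paper's, but it is not the same derivation. The paper drives the network with a \emph{voltage} source: it fixes $v_j=1$, $v_k=0$, writes KCL at the $n-2$ floating nodes to obtain the $(n-2)\times(n-2)$ system $\mathcal{L}_{\mathcal{G},\{n-2\}}\mathbf{v}_{\{n-2\}}=\mathbf{b}_{\mathcal{G}}$, then computes the current at $j$ as $I=d_{n-1}-\sum_{p\in\mathcal{N}(n-1)}v_p$ and simplifies this sum via a two-step Laplace expansion of $\det\mathcal{L}_{\mathcal{G},\{n-1\}}$ to reach $I=\det\mathcal{L_G}[k,k]/\det\mathcal{L_G}[j,k]$; the impedance is then the reciprocal. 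You instead use the dual \emph{current}-source setup, ground $k$, work with the $(n-1)\times(n-1)$ reduced Laplacian, and read off $v_j$ by a single application of Cramer's rule. Your route avoids the paper's determinant-expansion identity (its equations (17)--(21)) entirely, and the sign discussion you flag is genuinely trivial since the relevant cofactor is diagonal. The paper's approach has the minor advantage of making the interpretation $\det\mathcal{L_G}[k,k]=\mathfrak{T}(\mathcal{G})$ appear naturally in the current formula, while yours buys brevity and makes the passage to complex $\gamma$ (RL/RC) transparently formal, as you note. Both are complete; yours is shorter.
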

\begin{proof}
	The  Thevenin's impedance $z^{th}_{j,k}$ is an equivalent impedance as seen through boundary nodes $j$ and $k$, and is computed as the ratio of applied voltage, i.e., $1\angle0$,  to the response current $\mathbf{i}_{j}$, shown in Fig.\ref{fig1:zthcalc},
	\begin{figure}[h]
		\centering
		\includegraphics[scale=0.25]{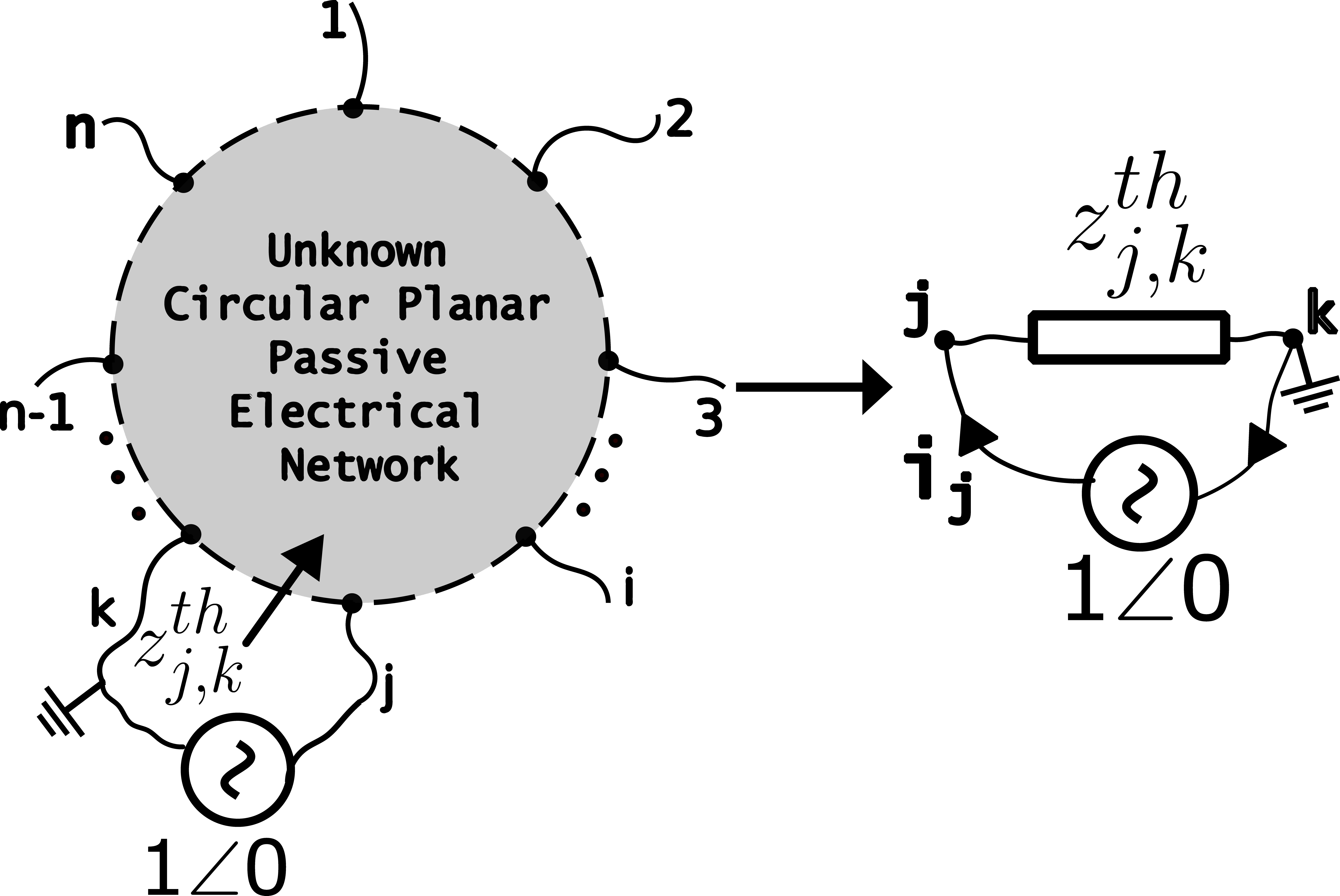}
		\caption{Computing  Thevenin's impedance $z^{th}_{j,k}$.}
		\label{fig1:zthcalc}
	\end{figure}
	\noindent The response current $\mathbf{i}_j$ is given as,
	\begin{equation}
		\mathbf{i}_j=\sum_{p\in\mathcal{N}\left(j\right)}{(v_j-v_p)\gamma{(jp)}},
	\end{equation}
	where $\mathcal{N}(j)$ is a set of nodes in neighbourhood of node $j$.
	The  Thevenin's impedance $z^{th}_{j,k}$ is then defined as,
	\begin{equation}
		z^{th}_{j,k}=\left(\mathbf{i}_j\right)^{-1}=\gamma^{-1}\left(d_j-\sum_{p\in\mathcal{N}(j)}{v_p}\right)^{{-1}},
	\end{equation}
	where $d_j$ is the degree of $j^{th}$ node. 
	Without loss of generality let $j=n-1$ and $k=n$, then the  Thevenin's impedance $z^{th}_{(n-1),n}$ is,
	\begin{equation}\label{eqn:thev_1}
		z^{th}_{(n-1),n}=\gamma^{-1}\left(d_{n-1}-\sum_{p\in\mathcal{N}(n-1)}{v_p}\right)^{-1}.
	\end{equation}
	From Kirchhoff's current law, we have,
	\begin{equation}\label{eqn:kcl}
		\sum_{m \in \mathcal{N}(l)}(v_l-v_m)\gamma(lm)=0,\forall l \ne \left(n-1\right),n.
	\end{equation}
	Equation (\ref{eqn:kcl}) can be rewritten as,
	\begin{equation}\label{thm1:eqn8}
		\mathcal{L}_{\Gamma}\big(\{n-2\},\{n-2\}\big)\mathbf{v}\big(\{n-2\}\big)=\mathbf{b}_{\Gamma},
	\end{equation}
	since $v_{n-1}=1\angle 0$ and $v_n=0$. Here $\mathbf{v}{\left(\{n-2\}\right)}=\left[v_1,v_2,\ldots,v_{(n-2)}\right]^T$, $\mathcal{L}_{\Gamma}\big(\{n-2\},\{n-2\}\big)$ is a submatrix obtained by including elements indexed by set $\{n-2\}$,  $\mathbf{b}_{\Gamma}=\left[b_{\Gamma,1}, b_{\Gamma,2},\ldots,b_{\Gamma,(n-2)}\right]^\mathbf{T}$,  and $b_{\Gamma,k}=\gamma(k(n-1))$ if nodes $k$ and $(n-1)$ are incident, otherwise $b_{\Gamma,k}=0$. Let us denote the submatrix $\mathcal{L}_{\Gamma}\left(\{n-2\},\{n-2\}\right)$ as $\mathcal{L}_{\Gamma,\{n-2\}}$ and $\mathbf{v}\left(\{n-2\}\right)$ as $\mathbf{v}_{\{n-2\}}$, then equation (\ref{thm1:eqn8}) is, 
	\begin{equation}\label{eqn:ln-2}
		\mathcal{L}_{\Gamma,\{n-2\}}\mathbf{v}_{\{n-2\}}=\mathbf{b}_{\Gamma}.
	\end{equation}
	Using the fact that all the edge impedances are equal, equation (\ref{eqn:ln-2}) can be rewritten as, 
	\begin{equation}\label{ln-3}
		\mathcal{L}_{\mathcal{G},\{n-2\}}\mathbf{v}_{\{n-2\}}=\mathbf{b}_{\mathcal{G}},   
	\end{equation}
	where $\mathbf{b}_{\mathcal{G}}=\left[b_{\mathcal{G},1}, b_{\mathcal{G},2},\ldots,b_{\mathcal{G},(n-2)}\right]^\mathbf{T}$, $b_{\mathcal{G},k}=1$ if nodes $k$ and $(n-1)$ are incident, otherwise $b_k=0$.
	\noindent From equation (\ref{ln-3}), the node voltages $\mathbf{v}_{\{n-2\}}$ is given as,
	\begin{equation}
		\mathbf{v}_{\{n-2\}}=\mathcal{L}_{\mathcal{G},\{n-2\}}^{-1}\mathbf{b}_{\mathcal{G}}.	
	\end{equation}	
	Therefore, node voltage at $p^{th}$ node is given as,
	\begin{equation}\label{eqn:thev2}
		v_p = \sum_{l \in \mathcal{N}(n-1)} t_{pl},
	\end{equation}
	where $t_{pl}$ is $(p,l)^{th}$ element of $\mathcal{L}_{\mathcal{G},\{n-2\}}^{-1}$.

	\noindent Let us compute value of $t_{pl}$. We know that,
	\begin{equation}
		\mathcal{L}_{\mathcal{G},\left\{n - 2\right\}}^{ - 1} = \frac{{adj\left( {{\mathcal{L}_{\mathcal{G},\{n - 2\}}}} \right)}}{{\det{\mathcal{L}_{\mathcal{G},\{n - 2\}}}}}.
	\end{equation}
	Here, $adj$ is the adjoint of a matrix. Then, $t_{pl}$ is,
	\begin{equation}
		{t_{pl}} = {\left( { - 1} \right)^{p + l}}\frac{{\det {\mathcal{L}_{\mathcal{G},\{n - 2\}}}\left[{p,l}\right] }}{{\det {\mathcal{L}_{\mathcal{G},\{n - 2\}}}}}.
	\end{equation}
	Here ${\mathcal{L}_{\mathcal{G},\{n - 2\}}}\left[{p,l}\right]$ is a submatrix obtained by deleting $p^{th}$ row and $l^{th}$ column. Therefore, the node voltage $v_p$ is,  
	\begin{equation}
		{v_p} = \sum\limits_{p \in N\left( {n - 1} \right)} {{{\left( { - 1} \right)}^{p + l}}\frac{{\det {L_{\mathcal{G},\left\{ {n - 2} \right\}}}\left[ {p,l} \right]}}{{\det {L_{\mathcal{G},\left\{ {n - 2} \right\}}}}}}. 	
	\end{equation}
	\noindent Substitute value of $v_p$ in equation (\ref{eqn:thev2}) into equation (\ref{eqn:thev_1}), therefore we have, 
	\begin{equation}\label{eqn:thev_3}
		\begin{split}
			z^{th}_{(n-1),n}&=\gamma^{-1}\left(d_{n-1}-\sum_{p\in\mathcal{N}(n-1)}{\sum_{l \in \mathcal{N}(n-1)} t_{pl}}\right)^{-1},\\
			&= \gamma^{-1} I^{-1},
		\end{split}
	\end{equation}
	and 
	\begin{equation}\label{eqn:thm1_17}
		\begin{split}
			I=&\left(d_{n-1}-\sum_{p\in\mathcal{N}(n-1)}{\sum_{l \in \mathcal{N}(n-1)} t_{pl}}\right),\\
			=&\left(d_{n-1}-\sum_{p\in\mathcal{N}(n-1)}{\sum_{l \in \mathcal{N}(n-1)} {\left( {-1} \right)^{p + l}}\frac{{\det {\mathcal{L}_{\mathcal{G},\{n - 2\}}}\left[{p,l}\right] }}{{\det {\mathcal{L}_{\mathcal{G},\{n - 2\}}}}}}\right).
		\end{split}
	\end{equation}
	Let us look at $\det\mathcal{L}_{\mathcal{G},\{n-1\}}=\det\mathcal{L}_{\mathcal{G}}\left(\{n-1\},\{n-1\}\right)$, and expand it with respect to last column, then, 
	\begin{multline}\label{eqn:thm1_18}
		{\det {\mathcal{L}_{\mathcal{G},\{n - 1\}}}}=d_{n-1}{\det {\mathcal{L}_{\mathcal{G},\{n - 2\}}}}-\\
		\sum_{p\in \mathcal{N}(n-1)} (-1)^{p+(n-1)}{\det {\mathcal{L}_{\mathcal{G},\{n - 1\}}}}\left[p,n-1\right]
	\end{multline}
	The submatrix ${\mathcal{L}_{\mathcal{G},\{n - 1\}}}\left[p,n-1\right]$ is obtained by deleting $p^{th}$ row and $(n-1)^{th}$ column from ${\mathcal{L}_{\mathcal{G},\{n - 1\}}}$.
	The term ${\det {\mathcal{L}_{\mathcal{G},\{n - 1\}}}}\left[p,n-1\right]$ can be further expanded, with respect to the last row as,
	\begin{multline}\label{eqn:thm1_19}
		{\det {\mathcal{L}_{\mathcal{G},\{n - 1\}}}}\left[p,n-1\right]=\\\sum_{l \in \mathcal{N}(n-1)}(-1)^{l+(n-1)}{\det {\mathcal{L}_{\mathcal{G},\{n - 2\}}}}[p,l].
	\end{multline}
	Substitute equation (\ref{eqn:thm1_19}) in equation (\ref{eqn:thm1_18}), we get,
	\begin{multline}\label{eqn:thm1_20}
		{\det {\mathcal{L}_{\mathcal{G},\{n - 1\}}}}=d_{n-1}{\det {\mathcal{L}_{\mathcal{G},\{n - 2\}}}}-\\
		\sum_{p\in \mathcal{N}(n-1)}\sum_{l\in \mathcal{N}(n-1)} (-1)^{p+l}{\det {\mathcal{L}_{\mathcal{G},\{n - 2\}}}}\left[p,l\right].
	\end{multline}
	Now, substitute equation (\ref{eqn:thm1_20}) in equation (\ref{eqn:thm1_17}), we get,
	\begin{equation}\label{eqn:thm1_21}
		\begin{split}
			I = \frac{\det \mathcal{L}_{\mathcal{G},\{n-1\}}}{\det \mathcal{L}_{\mathcal{G},\{n-2\}}}
			=\frac{\det \mathcal{L}_{\mathcal{G}}[n,n]}{\det \mathcal{L}_{\mathcal{G}}[n-1,n]}.
		\end{split}
	\end{equation}
	To get $z^{th}_{(n-1),n}$, substitute equation (\ref{eqn:thm1_21}) into equation (\ref{eqn:thev_3}), we therefore have, 
	\begin{equation}
		\begin{split}
			z^{th}_{(n-1),n}= \gamma^{-1} I^{-1}
			=\gamma^{-1}\frac{\det \mathcal{L}_{\mathcal{G}}[n-1,n]}{\det\mathcal{L}_{\mathcal{G}}[n,n]}.
		\end{split}
	\end{equation}
	Similarly, the  Thevenin's impedance across nodes $j$ and $k$ in a circular planar passive electrical network $\Gamma \in \Gamma_{\beta}$, $z^{th}_{j,k}$ is given by,
	\begin{equation}\label{resist_form}
		z^{th}_{j,k}=\gamma^{-1}\frac{\det \mathcal{L}_{\mathcal{G}}[j,k]}{\det\mathcal{L}_{\mathcal{G}}[k,k]},
	\end{equation} 
	here $\det\mathcal{L}_{\mathcal{G}}[k,k]=\mathfrak{T}\left(\mathcal{G}\right),\,\forall k \in \mathcal{V_B}$, where $\mathfrak{T}\left(\mathcal{G}\right)$ is the number of spanning tree. \quad \quad \quad \quad \quad \quad \quad \quad \quad \quad \quad \quad 
\end{proof}
\noindent The  Thevenin's resistance $r^{th}_{j,k}$ across boundary nodes $j,k\in\mathcal{V_{B}}$, defined on networks in $\Gamma^{R}_{\beta} \subset \Gamma_{\beta}$ is, 
$$ z^{th}_{j,k}=r^{th}_{j,k}=\beta \frac{\det \mathcal{L}_{\mathcal{G}}[j,k]}{\det\mathcal{L}_{\mathcal{G}}[k,k]}.$$
Thevenin's resistance $r^{th}_{j,k},\,\forall j,k \in \mathcal{V_B}$ satisfies two important properties, i.e., the triangle and the Kalmanson's inequality. Both inequalities are stated in Theorem \ref{thm:tri_ineq1} and \ref{thm:kalmanson1}. 
\begin{thm}\big(Triangle Inequality\big)\label{thm:tri_ineq1}
	\cite{choi2019resistance} Consider a circular planar passive resistive network $\Gamma=\left(\mathcal{G}, \mathcal{V_{B}} \right) \in \Upsilon^{R}$. For any three distinct circularly ordered boundary nodes $j,k,l \in \mathcal{V_{B}}$, the  Thevenin's resistance $r_{j,k}^{th}$, $r_{k,l}^{th}$ and $r_{j,l}^{th}$  satisfies,
	\begin{equation*}\label{triineq}
		r_{j,l}^{th} \le r_{j,k}^{th}+r_{k,l}^{th}.
	\end{equation*}
\end{thm}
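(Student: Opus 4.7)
The plan is to use the Laplacian pseudoinverse representation of Thevenin's resistance together with the discrete maximum principle for harmonic functions. Because the statement concerns a general resistive network $\Gamma \in \Upsilon^{R}$ rather than the equal-impedance subclass $\Gamma_\beta$, I cannot invoke the determinantal formula of Theorem \ref{impedance_dist} directly; the edge resistances may differ. Instead, I would start from the classical identity
\begin{equation*}
r^{th}_{a,b} \;=\; \mathbf{e}_{ab}^{\mathbf{T}}\, \mathcal{L}_\Gamma^{+}\, \mathbf{e}_{ab},
\end{equation*}
which follows from solving $\mathcal{L}_\Gamma \, \phi = \mathbf{e}_{ab}$ and reading off the potential difference $\phi_a - \phi_b$ when a unit current is injected at $a$ and extracted at $b$. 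Here $\mathcal{L}_\Gamma^{+}$ is the Moore--Penrose inverse on the subspace orthogonal to $\mathbf{1}$.

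The triangle inequality then reduces to a single non-positivity claim. Using $\mathbf{e}_{jl} = \mathbf{e}_{jk} + \mathbf{e}_{kl}$ and expanding the quadratic form gives
\begin{equation*}
r^{th}_{j,l} \;=\; r^{th}_{j,k} + r^{th}_{k,l} + 2\, \mathbf{e}_{jk}^{\mathbf{T}}\, \mathcal{L}_\Gamma^{+}\, \mathbf{e}_{kl},
\end{equation*}
so it suffices to show that the cross term is non-positive. Setting $\phi = \mathcal{L}_\Gamma^{+} \mathbf{e}_{jk}$, which is the voltage distribution when a unit current enters at $j$ and leaves at $k$, the cross term equals $\phi_k - \phi_l$. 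Hence the triangle inequality is equivalent to $\phi_l \ge \phi_k$ for every boundary node $l \ne j,k$.

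This last inequality is the discrete minimum principle: $\phi$ is $\mathcal{L}_\Gamma$-harmonic on every node other than $j$ and $k$, and all edge conductances are strictly positive, so the minimum of $\phi$ is attained at $k$. I would prove this by the usual short contradiction: if some interior node $l$ attained the minimum, harmonicity would force every neighbour of $l$ to share the value $\phi_l$, and propagating this equality along any path from $l$ to $k$ — which exists by connectedness of $\mathcal{G}$ — would give $\phi_l = \phi_k$. Either way $\phi_l \ge \phi_k$, which closes the argument.

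The main obstacle is essentially bookkeeping rather than mathematics: one must remember that $\phi$ is pinned down only up to an additive constant because $\mathcal{L}_\Gamma$ is singular, but every quantity appearing in the proof is a potential \emph{difference}, so this ambiguity is harmless. I would also remark that the circular ordering of $j,k,l$ plays no role here and only becomes relevant later in Kalmanson's inequality; the triangle inequality in fact holds for any three distinct nodes of any finite, connected, passive resistive network, which matches the form cited from \cite{choi2019resistance}.
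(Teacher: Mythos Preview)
Your proposal is correct and matches the paper's own methodology. The paper actually cites Theorem~\ref{thm:tri_ineq1} from \cite{choi2019resistance} without proof, but when it proves the generalizations (Theorem~\ref{thm:tri_ineq_RL_RC1} in Appendix~\ref{App:A} and Corollary~\ref{cor6_1} in Appendix~\ref{App:A1}) it uses exactly your decomposition: write $\mathbf{e}_{jl}=\mathbf{e}_{jk}+\mathbf{e}_{kl}$, expand the quadratic form in $\mathcal{L}_\Gamma^{\dagger}$, and identify the cross term $2\,\mathbf{e}_{jk}^{\mathbf{T}}\mathcal{L}_\Gamma^{\dagger}\mathbf{e}_{kl}$ as a potential difference in a unit-current experiment. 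The only difference is cosmetic: the paper interprets the cross term via the experiment ``current in at $k$, out at $l$'' and asserts $v_{jl}<v_{kl}$ without further justification, whereas you use the symmetric experiment ``current in at $j$, out at $k$'' and explicitly invoke the discrete minimum principle to justify $\phi_l\ge\phi_k$. Your remark that circular ordering is irrelevant for the triangle inequality is also correct and worth keeping.
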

\begin{thm}\big(Kalmanson's  Inequality\big)\label{thm:kalmanson1}
	\cite{demidenko1997note} For any resistive network $\Gamma=\left(\mathcal{G}, \mathcal{V_{B}} \right) \in \Upsilon^{R}$, the Thevenin's resistances $r^{th}_{j,k}, r^{th}_{k,l}, r^{th}_{l,m}, r^{th}_{j,l}, r^{th}_{j,k}, \mbox{ and }r^{th}_{i,l}$ defined on any four distinct circularly ordered boundary nodes $j,k,l,m \in \mathcal{V_{B}}$ satisfies,
	\begin{equation*}
		\begin{array}{l}
			{r^{th}_{j,l}} + {r^{th}_{k,m}} \ge {r^{th}_{j,k}} + {r^{th}_{l,m}}\, \& \,
			{r^{th}_{j,l}} + {r^{th}_{k,m}} \ge {r^{th}_{k,l}} + {r^{th}_{j,m}}.
		\end{array}
	\end{equation*}
\end{thm}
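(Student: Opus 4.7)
My plan is to prove Kalmanson's inequality via the matrix-tree interpretation of Thevenin's resistance, combined with a Jordan-curve argument specific to circular planar graphs. For any resistive network $\Gamma \in \Upsilon^R$ with weighted Laplacian $\mathcal{L}_\Gamma$, the analogue of Theorem \ref{impedance_dist} for arbitrary edge conductances gives $r^{th}_{j,k} = \det\mathcal{L}_\Gamma[j,k]/\det\mathcal{L}_\Gamma[k,k]$, and the all-minors matrix-tree theorem rewrites this as $r^{th}_{j,k} = W(j|k)/\mathfrak{T}_w(\mathcal{G})$, where $W(j|k) := \sum_{F \in \mathcal{F}(j|k)} w(F)$, $\mathcal{F}(j|k)$ is the set of spanning 2-forests of $\mathcal{G}$ in which $j$ and $k$ lie in different components, $w(F) := \prod_{e \in F} \gamma(e)$, and $\mathfrak{T}_w(\mathcal{G}) := \sum_T w(T)$. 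Clearing the positive denominator reduces the two Kalmanson inequalities to the purely combinatorial claims $W(j|l) + W(k|m) \ge W(j|k) + W(l|m)$ and $W(j|l) + W(k|m) \ge W(k|l) + W(j|m)$.

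The crux is the following structural lemma on circular planar graphs: for any spanning 2-forest $F = T_1 \cup T_2$, the sets $V(T_1) \cap C$ and $V(T_2) \cap C$ form two complementary arcs of $C$. This is a Jordan-curve argument: if the two sets were interleaved on $C$, with $a,c \in V(T_1)$ and $b,d \in V(T_2)$ appearing in circular order $a,b,c,d$, then the unique $T_1$-path from $a$ to $c$ and the unique $T_2$-path from $b$ to $d$ would be two vertex-disjoint curves in the disc $D$ with interleaved endpoints on $C$, which is topologically impossible in a plane graph. Granting this lemma, every spanning 2-forest is classified by how its arc partition restricts to the four circularly ordered boundary nodes $j<k<l<m$. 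Writing $N_A$ and $N_B$ for the total weight of forests whose restricted partitions are $\{j,k\}|\{l,m\}$ and $\{k,l\}|\{j,m\}$ respectively, and $N_{C_j}, N_{C_k}, N_{C_l}, N_{C_m}$ for the four singleton-type restrictions, a short tally of which boundary pairs each type separates yields
$$W(j|l) + W(k|m) - W(j|k) - W(l|m) = 2 N_A \ge 0,$$
and analogously $W(j|l) + W(k|m) - W(k|l) - W(j|m) = 2 N_B \ge 0$, establishing both Kalmanson inequalities.

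The main obstacle I expect is the arc-partition lemma, whose rigorous proof must carefully exploit the embedding of $\mathcal{G}$ in $D$ and the impossibility of vertex-disjoint crossings between subgraphs of a plane graph; the rest of the argument is routine bookkeeping. A technical point worth verifying along the way is the correct sign in the matrix-tree identity $\det\mathcal{L}_\Gamma[j,k] = W(j|k)$ (Chaiken's all-minors theorem), so that the ratio emerging from the generalisation of Theorem \ref{impedance_dist} is genuinely the non-negative weighted combination used above. Should the combinatorial route prove awkward, a backup is Thomson's variational principle combined with planar flow-uncrossing, which encodes the same crossing/non-crossing dichotomy in the language of electrical flows and leads to the same dichotomy $N_A, N_B \ge 0$ by energy comparison.
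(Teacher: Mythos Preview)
Your argument is correct and genuinely different from the route the paper takes. In the paper, Theorem~\ref{thm:kalmanson1} is cited from \cite{demidenko1997note} without an independent proof, but the closely related Corollary~\ref{triangle_gamma_beta} is proved in Appendix~\ref{App:B2} (and the same method extends verbatim to all of $\Upsilon^R$): one first derives the pseudoinverse identity $r^{th}_{j,l}+r^{th}_{k,m}-r^{th}_{j,k}-r^{th}_{l,m}=2\,e_{kl}^T\mathcal{L}_\Gamma^{\dagger}e_{jm}$, then Kron--reduces onto $\{j,k,l,m\}$ to obtain an explicit rational expression whose numerator is a $2\times 2$ response--matrix subdeterminant, and finally invokes the Curtis--Morrow $2$--connection criterion (Theorem~3.13 of \cite{curtis2000inverse}) to conclude that this subdeterminant is positive for circular planar networks. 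Your approach bypasses both the Kron reduction and the appeal to \cite{curtis2000inverse}: you go directly through Chaiken's all--minors matrix--tree theorem to write each $r^{th}$ as a weighted count of separating $2$--forests, and then your arc--partition lemma (the Jordan--curve argument) does the work that the Curtis--Morrow determinant positivity does in the paper. The bookkeeping you sketch is exactly right: the non--crossing restriction forces the induced partition of $\{j,k,l,m\}$ to be one of the seven ``interval'' types, the type $\{j,l\}\mid\{k,m\}$ is forbidden, and the resulting tally gives $2N_A$ and $2N_B$ on the nose. Your route is more self--contained and yields a combinatorial interpretation of the slack (as twice the weighted count of $2$--forests of type $A$ or $B$); the paper's route is shorter once one is willing to import the response--matrix machinery, and it makes the transfer--impedance interpretation $e_{kl}^T\mathcal{L}_\Gamma^{\dagger}e_{jm}$ explicit, which is what the rest of the paper needs for the RL/RC generalisations.
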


\noindent In the following section, we prove that the triangle and the Kalmanson's inequalities on networks in $\Upsilon^{RL}$ and $\Upsilon^{RC}$ hold if specific boundary conditions are satisfied and Thevenin's impedance lies within a certain convex cone. \emph{These generalized inequalities will serve as constraints in a system of multivariate polynomial equations, whose solution yields a reconstructed topology and a set of valid electrical networks corresponding to measurements in $Z^{th}(\Gamma)$. Details of this reconstruction process are provided in section \ref{sec:prob_form}}.
\subsection{Triangle \& Kalmanson's Inequality on Networks in $\Upsilon$}\label{sec:triandKalman}
The  Thevenin's impedance defined on RL and RC networks is complex numbers; therefore, to define inequalities on  Thevenin's impedances, there must be an ordering, described below. 
\subsubsection{Ordering} 
Let $\left(x,y\right)$ be an ordered pair and, let $\mathcal{Q}_r$ be the set of all ordered pairs in $r^{th}$ quadrant, $\forall r \in \{1,2,3,4\}$. Thus $\mathcal{Q}_r$ is a cone. Now, consider ordered pairs $\left(x_a,y_a\right)\, \mbox{and} \, \left(x_b,y_b\right)$ in $\mathcal{Q}_r$. A binary relation $\preceq^\mathbf{r}$ on set $\mathcal{Q}_r$ is defined as  $\left(x_a,y_a\right) \preceq^\mathbf{r} \left(x_b,y_b\right)$ iff $\left(x_a,y_a\right)-\left(x_b,y_b\right)\in\mathcal{Q}_r$. Therefore,  $\left(\mathcal{Q}_r,\preceq^{\mathbf{r}}\right), \forall r \in \{1,2,3,4\}$ is a partially ordered set.
\subsubsection{Triangle Inequality}
Generalized triangle inequality for networks in $\Upsilon$ is given in Theorem \ref{thm:tri_ineq_RL_RC1}, with a detailed proof in Appendix \ref{App:A}.
\begin{thm}\label{thm:tri_ineq_RL_RC1}
	Let $\Gamma=\left(\mathcal{G},\gamma \right) \in \Upsilon$ be a circular planar passive electrical network. For any three distinct circularly ordered boundary nodes $j, k, l \in \mathcal{V_B}$, let $t={e_{jk}^T\mathcal{L}_\Gamma ^{\dagger}{e_{kl}}}$, where $t\in \mathbb{C} $, be the potential difference across boundary nodes $j$ and $k$ when $1\angle0 $ Ampere current is fed into $k^{th}$ node and $l^{th}$ node is grounded. Then, depending on the quadrant in which $t$ lies in a complex plane, the following inequalities hold,
	\begin{itemize}
		\item {If $\big(\Re \left( t \right),\Im \left( t \right)\big)\in\mathcal{Q}_{1}, \implies z^{th}_{j,k}+z^{th}_{k,l}\preceq^{\mathbf{r}}z^{th}_{j,l}$\vspace{0.1cm}}.
		\item {If $\big( \Re \left( t \right),\,\Im \left( t\right) \big)\in \mathcal{Q}_{3},\implies z^{th}_{j,l} \preceq^{\mathbf{r}} z^{th}_{j,k}+z^{th}_{k,l}.$}
		\item {If $\big( \Re \left( t \right),\,\Im \left( t \right) \big)\in \mathcal{Q}_{2},\\ \implies  \left(r^{th}_{j,l}, x^{th}_{j,k}+x^{th}_{k,l}\right)\preceq^\mathbf{r} \left(r^{th}_{j,k}+r^{th}_{k,l},x^{th}_{j,l}\right).$\vspace{0.1cm}}
		\item {If $ \big(\Re \left( t \right),\Im\left( t \right) \big)\in\mathcal{Q}_{4},\\  \implies \left(r^{th}_{j,k}+r^{th}_{k,l}, x^{th}_{j,l}\right)\preceq^\mathbf{r} \left(r^{th}_{j,l},x^{th}_{j,k}+x^{th}_{k,l}\right)$}\vspace{0.1cm}
	\end{itemize}
	Where, $\mathbf{r}=1$ for RL network, $\mathbf{r}=2$ for RC network.  
\end{thm}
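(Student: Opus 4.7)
The overall plan is to derive a single complex identity that expresses $z^{th}_{j,l}$ in terms of $z^{th}_{j,k}$, $z^{th}_{k,l}$, and the quantity $t = e_{jk}^{T}\mathcal{L}_{\Gamma}^{\dagger}e_{kl}$, and then to read off each of the four bullets by inspecting the sign pattern of $\Re(t)$ and $\Im(t)$.

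First I establish the pseudoinverse representation
\begin{equation*}
z^{th}_{j,k} \;=\; e_{jk}^{T}\mathcal{L}_{\Gamma}^{\dagger}e_{jk},
\end{equation*}
the direct complex analogue of the standard resistance-distance formula. Injecting unit current across $j$ and $k$ gives a current vector equal to $e_{jk}$; since $\mathbf{1}^{T}e_{jk}=0$ and $\mathcal{G}$ is connected, $e_{jk}$ lies in the range of $\mathcal{L}_{\Gamma}$, so the node-voltage vector is $\mathcal{L}_{\Gamma}^{\dagger}e_{jk}$ and the applied voltage (numerically equal to $z^{th}_{j,k}$) is $e_{jk}^{T}\mathcal{L}_{\Gamma}^{\dagger}e_{jk}$. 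I then use $e_{jl}=e_{jk}+e_{kl}$ together with the complex symmetry of $\mathcal{L}_{\Gamma}$ (hence of $\mathcal{L}_{\Gamma}^{\dagger}$) to expand
\begin{equation*}
z^{th}_{j,l} \;=\; (e_{jk}+e_{kl})^{T}\mathcal{L}_{\Gamma}^{\dagger}(e_{jk}+e_{kl}) \;=\; z^{th}_{j,k}+z^{th}_{k,l}+2t,
\end{equation*}
which is the master identity. Separating into real and imaginary parts gives $r^{th}_{j,l}-r^{th}_{j,k}-r^{th}_{k,l}=2\,\Re(t)$ together with a companion relation for the reactive components $x^{th}$, whose sign is $+$ for RL networks (edges in $\mathcal{Q}_{1}$) and $-$ for RC networks (edges in $\mathcal{Q}_{4}$).

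With the master identity in hand, each bullet of the theorem becomes a direct verification against the definition of $\preceq^{\mathbf{r}}$. If $\bigl(\Re(t),\Im(t)\bigr)\in\mathcal{Q}_{1}$ or $\mathcal{Q}_{3}$, both components of $z^{th}_{j,l}-z^{th}_{j,k}-z^{th}_{k,l}$ share a common sign, so the entire difference lies in a single cone and the two ``clean'' orderings follow at once. If instead the pair lies in $\mathcal{Q}_{2}$ or $\mathcal{Q}_{4}$, one component is positive while the other is negative; moving the oppositely signed component to the other side of the equality produces the ``mixed'' orderings comparing $\bigl(r^{th}_{j,l},\,x^{th}_{j,k}+x^{th}_{k,l}\bigr)$ and $\bigl(r^{th}_{j,k}+r^{th}_{k,l},\,x^{th}_{j,l}\bigr)$ in the two possible directions, exactly as stated in the last two bullets.

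The real work is in the preparatory conventions rather than the algebra. First, the pseudoinverse identity must be justified when $\mathcal{L}_{\Gamma}$ is merely complex-symmetric and neither Hermitian nor positive semidefinite; connectedness of $\mathcal{G}$ still delivers a one-dimensional null space spanned by $\mathbf{1}$ together with the range condition $\mathbf{1}^{T}e_{jk}=0$, so the node-voltage equation is uniquely solvable modulo a constant and the pseudoinverse representation remains valid once these spectral facts are checked. Second, since the partial order $\preceq^{\mathbf{r}}$ is defined only on the cone $\mathcal{Q}_{\mathbf{r}}$, each case must be cross-checked so that both sides of the asserted inequality lie in the same cone; the label $\mathbf{r}\in\{1,2\}$ switches between RL and RC precisely to absorb the opposite sign of the reactive component. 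Once these two bookkeeping issues are pinned down, the four cases of the theorem fall out directly from the master identity.
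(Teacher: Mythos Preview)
Your approach is correct and essentially identical to the paper's: both derive the master identity $z^{th}_{j,l}=z^{th}_{j,k}+z^{th}_{k,l}+2t$ from the pseudoinverse representation via the telescoping $e_{jl}=e_{jk}+e_{kl}$, and then read off the four cases from the sign pattern of $(\Re(t),\Im(t))$. Your proposal is actually a bit more careful than the paper in flagging the need to justify the pseudoinverse formula for complex-symmetric (non-Hermitian) $\mathcal{L}_{\Gamma}$ and to verify cone membership for the $\preceq^{\mathbf{r}}$ comparisons, points the paper leaves implicit.
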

\noindent The triangle inequality defined on networks in $\Gamma_{\beta}$ takes a special form as stated in Corrollary \ref{cor6_1}, and a detailed proof is given in Appendix \ref{App:A1}.  
\begin{cor}\label{cor6_1}
	For any electrical network $\Gamma=\left(\mathcal{G},\gamma \right)$, where $\Gamma \in \Gamma_{\beta}$. Choose any three distinct circularly ordered boundary nodes $j, k, l \in \mathcal{V_B}$. Then, the following inequality holds,
	\begin{equation}
		\det\mathcal{L_G}\left[j,l\right] \le \det\mathcal{L_G}\left[j,k\right] + \det\mathcal{L_G}\left[k,l\right].
	\end{equation}
\end{cor}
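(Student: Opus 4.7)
The plan is to derive the inequality as an immediate consequence of the classical triangle inequality (Theorem \ref{thm:tri_ineq1}) applied to the unit-conductance resistive network built on the same underlying graph $\mathcal{G}$, together with the formula in Theorem \ref{impedance_dist}. The point is that the stated inequality is a purely graph-theoretic statement about $\mathcal{G}$, with no $\beta$ appearing, so the specific type (R, RL, or RC) of $\Gamma \in \Gamma_\beta$ and the specific value of $\beta$ play no role; only the structure of $\mathcal{G}$ matters.

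First I would fix an auxiliary resistive network $\widetilde{\Gamma} = (\mathcal{G}, \widetilde{\gamma}) \in \Upsilon^R$ on the same graph $\mathcal{G}$ with unit conductance $\widetilde{\gamma} \equiv 1$. Because $\mathcal{G}$ is circular planar (inherited from $\Gamma$), $\widetilde{\Gamma}$ satisfies the hypotheses of Theorem \ref{thm:tri_ineq1}. Hence, for the three circularly ordered boundary nodes $j, k, l$,
\begin{equation*}
  \widetilde{r}^{th}_{j,l} \;\le\; \widetilde{r}^{th}_{j,k} + \widetilde{r}^{th}_{k,l}.
\end{equation*}

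Next I would rewrite each Thevenin resistance of $\widetilde{\Gamma}$ using Theorem \ref{impedance_dist} with $\gamma=1$:
\begin{equation*}
  \widetilde{r}^{th}_{j,l} = \frac{\det \mathcal{L_G}[j,l]}{\det \mathcal{L_G}[l,l]}, \quad
  \widetilde{r}^{th}_{j,k} = \frac{\det \mathcal{L_G}[j,k]}{\det \mathcal{L_G}[k,k]}, \quad
  \widetilde{r}^{th}_{k,l} = \frac{\det \mathcal{L_G}[k,l]}{\det \mathcal{L_G}[l,l]}.
\end{equation*}
Here I would invoke the key observation recorded at the end of the proof of Theorem \ref{impedance_dist} (the Matrix-Tree theorem): $\det \mathcal{L_G}[k,k] = \mathfrak{T}(\mathcal{G})$ is independent of $k \in \mathcal{V_B}$, so the three denominators above coincide. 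Since $\mathcal{G}$ is connected, $\mathfrak{T}(\mathcal{G}) > 0$, so clearing this common positive denominator preserves the inequality and yields exactly
\begin{equation*}
  \det \mathcal{L_G}[j,l] \;\le\; \det \mathcal{L_G}[j,k] + \det \mathcal{L_G}[k,l],
\end{equation*}
as claimed.

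There is essentially no real obstacle: the argument is a reduction, and the only non-trivial ingredient, namely that the scalar denominator is the same for all choices of the deleted boundary index, is guaranteed by the Matrix-Tree theorem. The subtle point worth emphasizing in the write-up is why the more delicate quadrant-dependent inequalities of Theorem \ref{thm:tri_ineq_RL_RC1} are not needed here: the conclusion of Corollary \ref{cor6_1} involves only the real combinatorial quantities $\det \mathcal{L_G}[\cdot,\cdot]$, so it suffices to route the proof through the resistive avatar $\widetilde{\Gamma}$ rather than through $\Gamma$ itself when $\Gamma$ is RL or RC.
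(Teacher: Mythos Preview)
Your proof is correct, but it takes a different and in fact more economical route than the paper. The paper works directly with the complex network $\Gamma$ via the identity $z^{th}_{j,l}=z^{th}_{j,k}+z^{th}_{k,l}+2t$ from Theorem~\ref{thm:tri_ineq_RL_RC1}, argues physically that $e_{jk}^{T}\mathcal{L}_{\mathcal{G}}^{\dagger}e_{kl}<0$ (since the potential at $j$ must lie below that at $k$ when current enters at $k$ and exits at $l$), places $t$ in the appropriate quadrant for the RL and RC cases separately, extracts the inequality on real parts, and only then invokes Theorem~\ref{impedance_dist} and the Matrix-Tree identity to clear the common denominator. You instead observe at the outset that the desired inequality is purely combinatorial in $\mathcal{L}_{\mathcal{G}}$, pass to the unit-conductance resistive network on the same graph, and invoke the classical Theorem~\ref{thm:tri_ineq1} directly; Theorem~\ref{impedance_dist} with $\gamma=1$ and the Matrix-Tree identity then finish the job in one stroke. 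Your reduction bypasses the quadrant bookkeeping entirely and makes transparent why $\beta$ and the R/RL/RC distinction are irrelevant; the paper's argument, by contrast, exercises the machinery of Theorem~\ref{thm:tri_ineq_RL_RC1} and is more self-contained in that it essentially re-derives the sign condition underlying Theorem~\ref{thm:tri_ineq1} rather than citing it.
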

\subsubsection{Kalmanson's Inequality}
The Kalmanson's inequality on networks in $\Upsilon$ is stated below in Theorem \ref{thm:kalman_ineq_RL_RC21}. A detailed proof is given in Appendix \ref{App:B}.
\begin{thm}\label{thm:kalman_ineq_RL_RC21}
	Consider a circular planar passive electrical network $\Gamma=\left(\mathcal{G},\gamma \right)$, where $\Gamma \in \Upsilon$. For any four distinct circularly ordered boundary nodes $j, k, l, m\in \mathcal{V_B}$, let $f_1=e_{kl}^T\mathcal{L}_{\Gamma}^{\dagger}e_{jm}$ be the potential difference across the boundary nodes $k$ and $l$ when $1\angle0 $ Ampere current is fed into $j^{th}$ node and $m^{th}$ node is grounded and $f_2=e_{jk}^T\mathcal{L}_{\Gamma}^{\dagger}e_{ml}$, defined similarly. Depending on the quadrant in which $f_1$ lies in a complex plane, the following inequalities hold,
	\begin{itemize}
		\item If $\big(\Re\left(f_1\right),\Im\left(f_1\right)\big)\in \mathcal{Q}_1$
		$\implies z^{th}_{j,l}+z^{th}_{k,m} \preceq^{\mathbf{r}} z^{th}_{j,k}+z^{th}_{l,m},$	
		\item If $\big(\Re\left(f_1\right)\Im\left(f_1\right)\big) \in \mathcal{Q}_4$ $\implies$
		\begin{equation*}
			\left(r^{th}_{j,l}+r^{th}_{k,m},\,x^{th}_{j,k}+x^{th}_{l,m} \right)\preceq^{\mathbf{r}} \left(r^{th}_{j,k}+r^{th}_{l,m},\,x^{th}_{j,l}+x^{th}_{k,m}\right),	
		\end{equation*} 
		\item If $\big(\Re\left(f_1\right),\Im\left(f_1\right)\big)\in \mathcal{Q}_2 \implies$
		\begin{equation*}
			\left(r^{th}_{j,k}+r^{th}_{l,m},\,x^{th}_{j,l}+x^{th}_{k,m} \right)\preceq^{\mathbf{r}} \left(r^{th}_{j,l}+r^{th}_{k,m},\,x^{th}_{j,k}+x^{th}_{l,m}\right), 	
		\end{equation*}
		\item If $\big(\Re\left(f_1\right), \Im\left(f_1\right)\big)\in\mathcal{Q}_3$
		$\implies z^{th}_{j,k}+z^{th}_{l,m} \preceq^{\mathbf{r}} z^{th}_{j,l}+z^{th}_{k,m}.$	
	\end{itemize}
	Similarly for $f_2$, following inequalities hold,
	
	\begin{itemize}
		\item If $\big(\Re\left(f_2\right),\Im\left(f_2\right)\big)\in\mathcal{Q}_1$
		$\implies z^{th}_{j,l}+z^{th}_{k,m} \preceq^{\mathbf{r}} z^{th}_{k,l}+z^{th}_{j,m},$	
		\item If $\big(\Re\left(f_2\right),\Im\left(f_2\right)\big)\in\mathcal{Q}_4, \implies$
		\begin{equation*}
			\left(r^{th}_{j,k}+r^{th}_{l,m},\,x^{th}_{k,l}+x^{th}_{j,m} \right)\preceq^{\mathbf{r}} \left(r^{th}_{k,l}+r^{th}_{j,m},\,x^{th}_{j,k}+x^{th}_{l,m}\right),	
		\end{equation*} 
		\item If $\big(\Re\left(f_2\right),\Im\left(f_2\right)\big)\in\mathcal{Q}_2,\implies$
		\begin{equation*}
			\left(r^{th}_{k,l}+r^{th}_{j,m},\,x^{th}_{j,l}+x^{th}_{k,m} \right)\preceq^{\mathbf{r}} \left(r^{th}_{j,k}+r^{th}_{l,m},\,x^{th}_{k,l}+x^{th}_{j,m}\right), 	
		\end{equation*}
		\item If $\big(\Re\left(f_2\right),\Im\left(f_2\right)\big)\in\mathcal{Q}_3$
		$\implies z^{th}_{k,l}+z^{th}_{j,m} \preceq^{\mathbf{r}} z^{th}_{j,l}+z^{th}_{k,m}.$	
	\end{itemize}
\end{thm}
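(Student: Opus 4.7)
The plan is to reduce Theorem \ref{thm:kalman_ineq_RL_RC21} to two algebraic identities that express each Kalmanson combination of complex Thevenin's impedances as twice a bilinear form in $\mathcal{L}_\Gamma^\dagger$, and then read off each of the eight sub-cases directly from the quadrant assumption on $f_1$ or $f_2$. The starting point is the standard representation $z^{th}_{a,b}=e_{ab}^T\mathcal{L}_\Gamma^\dagger e_{ab}$, consistent with the physical definitions of $t,f_1,f_2$ already used in Theorems \ref{thm:tri_ineq_RL_RC1} and \ref{thm:kalman_ineq_RL_RC21}. For a connected passive network, $\mathcal{L}_\Gamma$ is a complex-symmetric Laplacian of nullity one with nullspace $\mathrm{span}(\mathbf{1})$, and the generalized inverse defined by ``invert on $\mathbf{1}^\perp$, extend by zero on $\mathbf{1}$'' satisfies the transpose symmetry $e_{cd}^T\mathcal{L}_\Gamma^\dagger e_{ab}=e_{ab}^T\mathcal{L}_\Gamma^\dagger e_{cd}$. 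This puts us in exactly the bilinear-form setting that underlies the resistive Kalmanson inequality of Theorem \ref{thm:kalmanson1}.

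The core step is algebraic. Using the telescoping identities $e_{jl}=e_{jk}+e_{kl}$, $e_{km}=e_{kl}+e_{lm}$, and $e_{jm}=e_{jk}+e_{kl}+e_{lm}$, I expand $e_{jl}^T\mathcal{L}_\Gamma^\dagger e_{jl}+e_{km}^T\mathcal{L}_\Gamma^\dagger e_{km}$ and subtract $e_{jk}^T\mathcal{L}_\Gamma^\dagger e_{jk}+e_{lm}^T\mathcal{L}_\Gamma^\dagger e_{lm}$; the self-terms cancel and the cross-terms collapse via symmetry to
\begin{equation*}
(z^{th}_{j,l}+z^{th}_{k,m})-(z^{th}_{j,k}+z^{th}_{l,m})=2\,e_{kl}^T\mathcal{L}_\Gamma^\dagger e_{jm}=2f_1.
\end{equation*}
Running the same computation against the alternative pairing of Kalmanson sums produces
\begin{equation*}
(z^{th}_{j,l}+z^{th}_{k,m})-(z^{th}_{k,l}+z^{th}_{j,m})=2\,e_{jk}^T\mathcal{L}_\Gamma^\dagger e_{ml}=2f_2.
\end{equation*}
These two identities are exact, and they reduce the eight inequalities of the theorem to a case analysis on the signs of the real and imaginary parts of $f_1$ and $f_2$.

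To finish, I project each identity onto real and imaginary parts. Writing $z^{th}_{a,b}=r^{th}_{a,b}+\mathbf{j}x^{th}_{a,b}$, the sign of $\Re(f_1)$ determines the direction of the resistance inequality, the sign of $\Im(f_1)$ determines the direction of the reactance inequality, and analogously for $f_2$. When $f_1\in\mathcal{Q}_1$ or $f_1\in\mathcal{Q}_3$ the two coordinates agree in sign, so the complex difference lies in a single cone and the definition $a\preceq^{\mathbf{r}} b\Leftrightarrow a-b\in\mathcal{Q}_{\mathbf{r}}$ yields the stated dominance of one complex sum over the other. When $f_1\in\mathcal{Q}_2$ or $f_1\in\mathcal{Q}_4$ the two coordinates disagree in sign, and the inequalities must be repackaged by pairing the resistance part of one side with the reactance part of the other; this is exactly what produces the ``mixed'' statements in which $r^{th}$'s and $x^{th}$'s are swapped between the two sides. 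The four $f_2$-cases follow by the identical argument. The main obstacle is purely bookkeeping: there are eight distinct sub-cases and the convention $a\preceq^{\mathbf{r}} b\Leftrightarrow a-b\in\mathcal{Q}_{\mathbf{r}}$, under which the coordinatewise-larger element sits on the left, makes it easy to misplace a cone or flip an inequality in the mixed-quadrant cases; a minor technical point is verifying the bilinear identity $z^{th}_{a,b}=e_{ab}^T\mathcal{L}_\Gamma^\dagger e_{ab}$ in the complex-symmetric setting, which follows once one checks nullity one of $\mathcal{L}_\Gamma$ on a connected passive network.
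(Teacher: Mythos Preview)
Your proposal is correct and follows essentially the same route as the paper's proof in Appendix~\ref{App:B}: expand $z^{th}_{j,l}+z^{th}_{k,m}$ via $e_{jl}=e_{jk}+e_{kl}$ and $e_{km}=e_{kl}+e_{lm}$ to obtain the identity $(z^{th}_{j,l}+z^{th}_{k,m})-(z^{th}_{j,k}+z^{th}_{l,m})=2e_{kl}^T\mathcal{L}_\Gamma^\dagger e_{jm}=2f_1$, derive the companion identity for $f_2$ by the alternate splitting, and then read off each of the eight cases from the quadrant of $f_1$ or $f_2$. The paper displays the intermediate $2z^{th}_{k,l}$ term before collapsing it into $f_1$, whereas you telescope in one step, but the substance is identical.
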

Where, $\mathbf{r}=1$ for RL network, $\mathbf{r}=2$ for RC network.\\
\noindent The Kalmason's inequality takes a special form for networks in $\Gamma_{\beta}$, which is stated in Corrollary \ref{cor:cor8}, and a concise proof is presented in Appendix \ref{App:B1}.
\begin{cor}\label{cor:cor8}
	Consider a circular planar passive electrical network $\Gamma=\left(\mathcal{G},\gamma \right)$, where $\Gamma \in \Gamma_{\beta}$. Choose any four circularly ordered boundary nodes $j, k, l, m \in \mathcal{V_B}$. Based on the sign of term $e_{kl}^T\mathcal{L}_{\mathcal{G}}^{\dagger}e_{jm}$ following inequalities hold,
	\begin{itemize}
		\item if $e_{kl}^T\mathcal{L}_{\mathcal{G}}^{\dagger}e_{jm}>0,$ then the inequality $\mathcal{K}^{kljm}_{>0}: \det \left( {{\mathcal{L_G}}\left[ {j,k} \right]} \right) + \det \left( {{{\mathcal{L_G}}}\left[ {l,m} \right]} \right) \le\det \left( {{{\mathcal{L_G}}}\left[ {j,l} \right]} \right) + \det \left( {{\mathcal{L_G}}\left[ {k,m} \right]} \right)$ holds.
		\item if $e_{kl}^T\mathcal{L}_{\mathcal{G}}^{\dagger}e_{jm}<0,$ then the inequality $\mathcal{K}^{kljm}_{<0}:\det \left( {{\mathcal{L_G}}\left[ {j,l} \right]} \right) + \det \left( {{{\mathcal{L_G}}}\left[ {k,m} \right]} \right) \le\det \left( {{{\mathcal{L_G}}}\left[ {j,k} \right]} \right) +  \det \left( {{\mathcal{L_G}}\left[ {l,m} \right]} \right)$ holds.
	\end{itemize}
	Based on the sign of term $e_{jk}^T\mathcal{L}_{\mathcal{G}}^{\dagger}e_{ml}$ following inequalities hold,
	\begin{itemize}	
		\item if $e_{jk}^T\mathcal{L}_{\mathcal{G}}^{\dagger}e_{ml}<0,$ then the inequality $\mathcal{K}^{jkml}_{<0}:\det \left( {{\mathcal{L_G}}\left[ {j,l} \right]} \right) + \det \left( {{{\mathcal{L_G}}}\left[ {k,m} \right]} \right) \le\det \left( {{{\mathcal{L_G}}}\left[ {k,l} \right]} \right) + \det \left( {{\mathcal{L_G}}\left[ {j,m} \right]} \right)$ holds.
		\item if $e_{jk}^T\mathcal{L}_{\mathcal{G}}^{\dagger}e_{ml}>0,$ then the inequality $\mathcal{K}^{jkml}_{>0}:\det \left( {{\mathcal{L_G}}\left[ {k,l} \right]} \right) + \det \left( {{{\mathcal{L_G}}}\left[ {j,m} \right]} \right) \le\det \left( {{{\mathcal{L_G}}}\left[ {j,l} \right]} \right) +  \det \left( {{\mathcal{L_G}}\left[ {k,m} \right]} \right)$ holds.
	\end{itemize}
\end{cor}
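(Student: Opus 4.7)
The plan is to reduce the Corollary to the classical Kalmanson inequality applied to the unit-conductance resistive copy of the underlying graph $\mathcal{G}$. Since every edge of $\Gamma\in\Gamma_\beta$ carries the same complex conductance $\gamma$, the Laplacian factorizes as $\mathcal{L}_\Gamma=\gamma\mathcal{L}_\mathcal{G}$, so $\mathcal{L}_\Gamma^{\dagger}=\gamma^{-1}\mathcal{L}_\mathcal{G}^{\dagger}$. I would introduce the auxiliary unit-conductance resistive network $\Gamma_\mathcal{G}\in\Upsilon^{R}$ on the same underlying graph: its Laplacian is precisely $\mathcal{L}_\mathcal{G}$, and by Theorem~\ref{impedance_dist} its Thevenin resistances satisfy $r^{th}_{p,q}(\Gamma_\mathcal{G})=\det\mathcal{L}_\mathcal{G}[p,q]/\mathfrak{T}(\mathcal{G})$ for every boundary pair $p,q$.

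Next I would apply Theorem~\ref{thm:kalman_ineq_RL_RC21} to $\Gamma_\mathcal{G}$. Because $\Gamma_\mathcal{G}\in\Upsilon^{R}$, the witness $f_1=e_{kl}^{T}\mathcal{L}_\mathcal{G}^{\dagger}e_{jm}$ is real, so only the $\mathcal{Q}_1$ case (when $f_1>0$) or the $\mathcal{Q}_3$ case (when $f_1<0$) is active, each collapsing $\preceq^{\mathbf{r}}$ to the usual order on $\mathbb{R}$. These respectively yield $r^{th}_{j,l}(\Gamma_\mathcal{G})+r^{th}_{k,m}(\Gamma_\mathcal{G})\ge r^{th}_{j,k}(\Gamma_\mathcal{G})+r^{th}_{l,m}(\Gamma_\mathcal{G})$ and its reverse. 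Multiplying through by the positive scalar $\mathfrak{T}(\mathcal{G})$ and substituting the determinant formula from Theorem~\ref{impedance_dist} converts each of these into the claimed $\mathcal{K}_{>0}^{kljm}$ or $\mathcal{K}_{<0}^{kljm}$. The second pair of implications, governed by the sign of $e_{jk}^{T}\mathcal{L}_\mathcal{G}^{\dagger}e_{ml}$, then follows from an identical argument with $f_1$ replaced by $f_2=e_{jk}^{T}\mathcal{L}_\mathcal{G}^{\dagger}e_{ml}$ and using the second bullet-group of Theorem~\ref{thm:kalman_ineq_RL_RC21}.

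The principal obstacle I anticipate is the passage from the complex-conductance setting of $\Gamma\in\Gamma_\beta$ to the real-valued auxiliary $\Gamma_\mathcal{G}$: one must verify that the four-way quadrant test on $e_{kl}^{T}\mathcal{L}_\Gamma^{\dagger}e_{jm}$ appearing in Theorem~\ref{thm:kalman_ineq_RL_RC21} collapses, under the hypothesis $\Gamma\in\Gamma_\beta$, to the single real sign test on $e_{kl}^{T}\mathcal{L}_\mathcal{G}^{\dagger}e_{jm}$. This collapse is exactly the content of the identity $\mathcal{L}_\Gamma^{\dagger}=\gamma^{-1}\mathcal{L}_\mathcal{G}^{\dagger}$: multiplication by the nonzero complex scalar $\gamma^{-1}$ merely rotates the quadrant of the pseudoinverse entry without altering the sign of the underlying real quantity, so the R, RL, and RC subcases of $\Gamma_\beta$ all reduce to one and the same determinant inequality on $\mathcal{L}_\mathcal{G}$.
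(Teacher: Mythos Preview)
Your proposal is correct and follows essentially the same route as the paper's proof: both use the factorization $\mathcal{L}_\Gamma^{\dagger}=\gamma^{-1}\mathcal{L}_\mathcal{G}^{\dagger}$ to reduce the quadrant test in Theorem~\ref{thm:kalman_ineq_RL_RC21} to the sign of the real scalar $e_{kl}^{T}\mathcal{L}_\mathcal{G}^{\dagger}e_{jm}$, and then invoke Theorem~\ref{impedance_dist} together with $\det\mathcal{L}_\mathcal{G}[k,k]=\mathfrak{T}(\mathcal{G})$ to pass from Thevenin quantities to determinants. The only organizational difference is that the paper applies Theorem~\ref{thm:kalman_ineq_RL_RC21} directly to $\Gamma$ and reads off real parts (doing the RL case explicitly and the RC case ``similarly''), whereas you pass through the auxiliary unit-conductance resistive network $\Gamma_\mathcal{G}$; your framing handles all three subcases $\Gamma_\beta^{R},\Gamma_\beta^{RL},\Gamma_\beta^{RC}$ uniformly in one stroke, which is a mild tidying but not a substantively different argument.
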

\begin{rem}
	The term $f_1=e_{kl}^T\mathcal{L}_{\mathcal{G}}^{\dagger}e_{jm}>0$ iff there exist two disjoint paths, one from $k$ to $j$ and other from $l$ to $m$. Similarly, term $f_2=e_{jk}^T\mathcal{L}_{\mathcal{G}}^{\dagger}e_{ml}>0$ iff there exist two disjoint paths, one path from $j$ to $m$ and other from $k$ to $l$. The proof concerning this is given in Corrollary-\ref{triangle_gamma_beta} (below equation (\ref{eqn45})) in Appendix \ref{App:B2}. 
\end{rem}
\noindent The Kalmanson's inequality for resistive networks in $\Gamma^R_{\beta}$ takes a special form, which is explained in Corollary-\ref{triangle_gamma_beta}. An alternate approach to derive the Kalmansons inequality for networks in $\Gamma_\beta^R$ is elucidated in Appendix \ref{App:B2}.
\begin{cor}\label{triangle_gamma_beta}
	Given a circular planar passive electrical network $\Gamma=\left(\mathcal{G},\gamma \right)$, where $\Gamma \in \Gamma_{\beta}^{R}$. For any four distinct circularly ordered boundary nodes $j, k, l, m \in \mathcal{V_B}$. Following inequalities, \begin{equation*}
		\begin{split}
			\det\mathcal{L_G}\left[j,k\right] + \det\mathcal{L_G}\left[l,m\right] \le \det\mathcal{L_G}\left[j,l\right] + \det\mathcal{L_G}\left[k,m\right],\\
			\det\mathcal{L_G}\left[j,m\right] + \det\mathcal{L_G}\left[k,l\right] \le \det\mathcal{L_G}\left[j,l\right] + \det\mathcal{L_G}\left[k,m\right], 
		\end{split}	
	\end{equation*}
	holds true.
\end{cor}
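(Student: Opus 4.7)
The plan is to derive the stated determinantal inequalities as a direct algebraic consequence of Theorem \ref{thm:kalmanson1} (the Kalmanson inequality for general resistive networks in $\Upsilon^R$) by substituting the explicit Laplacian expression for Thevenin resistance established in Theorem \ref{impedance_dist}. Since $\Gamma_\beta^R \subset \Upsilon^R$, Theorem \ref{thm:kalmanson1} applies without modification, so the work consists in rewriting the Thevenin resistances as determinantal ratios and then clearing denominators.

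First, I would invoke Theorem \ref{impedance_dist} in the resistive specialization noted just after its proof, which gives
\begin{equation*}
r^{th}_{j,k} \;=\; \beta\,\frac{\det\mathcal{L_G}[j,k]}{\det\mathcal{L_G}[k,k]}
\end{equation*}
for every pair of distinct boundary nodes $j,k$, together with the Matrix-Tree identity $\det\mathcal{L_G}[k,k]=\mathfrak{T}(\mathcal{G})$, which is independent of the choice of $k$. The connectedness of $\mathcal{G}$ and the assumption $\beta\in\mathbb{R}^+$ guarantee that $\mathfrak{T}(\mathcal{G})>0$ and $\beta>0$, so the common factor $\beta/\mathfrak{T}(\mathcal{G})$ is strictly positive.

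Next, I would apply Theorem \ref{thm:kalmanson1} to the four circularly ordered boundary nodes $j,k,l,m$, obtaining the two inequalities
\begin{equation*}
r^{th}_{j,l}+r^{th}_{k,m}\;\ge\;r^{th}_{j,k}+r^{th}_{l,m},\qquad r^{th}_{j,l}+r^{th}_{k,m}\;\ge\;r^{th}_{k,l}+r^{th}_{j,m}.
\end{equation*}
Substituting the formula from the previous paragraph and using that every diagonal cofactor equals $\mathfrak{T}(\mathcal{G})$, every term on both sides carries the same positive prefactor $\beta/\mathfrak{T}(\mathcal{G})$. Multiplying both inequalities through by $\mathfrak{T}(\mathcal{G})/\beta$ preserves their direction and yields exactly the two claimed determinantal inequalities.

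There is essentially no hard step here; the only subtlety is confirming that $\det\mathcal{L_G}[k,k]$ is the same constant $\mathfrak{T}(\mathcal{G})$ for all choices of $k$, which is the classical Kirchhoff Matrix-Tree theorem and is already recorded at the end of the proof of Theorem \ref{impedance_dist}. Once that is in hand, the corollary reduces to clearing a common positive denominator, so the proof can be written in a few lines.
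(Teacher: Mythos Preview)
Your argument is correct: since $\Gamma_\beta^R\subset\Upsilon^R$, Theorem~\ref{thm:kalmanson1} applies directly, and substituting the determinantal formula from Theorem~\ref{impedance_dist} together with the Matrix-Tree identity $\det\mathcal{L_G}[k,k]=\mathfrak{T}(\mathcal{G})>0$ lets you clear the common positive factor $\beta/\mathfrak{T}(\mathcal{G})$ to obtain both inequalities. Nothing is missing.

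However, your route is genuinely different from the paper's. The paper's Appendix~\ref{App:B2} is explicitly billed as an \emph{alternate} proof that does not invoke Theorem~\ref{thm:kalmanson1} as a black box. Instead it starts from the identity $r_{j,l}^{th}+r_{k,m}^{th}=r_{j,k}^{th}+r_{l,m}^{th}+2\beta\,e_{kl}^T\mathcal{L}_\mathcal{G}^\dagger e_{jm}$, performs a Kron reduction onto the four boundary nodes $j,k,l,m$, writes the resulting $4\times4$ Laplacian explicitly, and computes $f_1=e_{kl}^T\Lambda_\mathcal{G}^\dagger e_{jm}$ as a rational function of the six reduced edge conductances. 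The sign of the numerator $(be-af)$ is then established via the $2$-connection criterion of Curtis--Ingerman--Morrow (their Theorem~3.13), which uses circular planarity in an essential way. Your proof is shorter and more direct, but it outsources the hard step to the cited reference for Theorem~\ref{thm:kalmanson1}; the paper's version is self-contained and, importantly, makes explicit the structural reason (existence of disjoint paths, hence $f_1>0$) that underlies Remark~1 and the later construction of the constraint set $\widehat{\mathcal{C}}_\mathcal{K}$.
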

\noindent The triangle inequality in Corollary-\ref{cor6_1} and the Kalmanson's inequality in Corollary-\ref{cor:cor8} will serve as constraints in topology reconstruction and set characterization problem.
\section{Problem Formulation}\label{sec:prob_form}
In this work, we aim to reconstruct an unknown network $\Gamma \in \Gamma_{\beta}$ and a set of valid electrical networks satisfying the partially available Thevenin's impedance measurements in $Z^{th}$. Since $\Gamma$ is unknown, the corresponding Laplacian matrix $\mathcal{L}_{\Gamma}$ is also unknown. Therefore, to compute $\mathcal{L}_{\Gamma}$, we relate the $\mathcal{L}_{\Gamma}$ with the known  Thevenin's impedance measurements in set $Z^{th}$ using equation (\ref{imped_dist}). This relation yields a system of multivariate polynomial equations, whose solution yields a reconstructed network and a set of all $\mathcal{L}_{\Gamma}$, i.e., the set of all electrical networks in $\Gamma_{\beta}$ which satisfies the measurements in set $Z^{th}$. Further, we impose the triangle and the Kalmanson's inequalities on these multivariate polynomial equations to get a set of valid electrical networks. To impose the triangle and the Kalmanson's inequalities, consider set $\mathbf{I}_{\Delta}$ and $\mathbf{I}_{\mathcal{K}}$, which are the set of all circularly ordered node indices $\big\{j,k,l\big\}$ and $\big\{j,k,l,m\big\}$ respectively, chosen such that $j,k,l$ and $j,k,l,m$ are circularly ordered. And atleast one node is in $\mathcal{U_B}$ and remaining nodes are from $\mathcal{A}$.

For enforcing the triangle inequality, choose an indices say $\big\{j,k,l\big\} \in \mathbf{I}_{\Delta}$, if only one node, say $k$, in node indices $\big\{j,k,l\big\}$ is in $\mathcal{U_B}$ and $j,l \in \mathcal{A}$ the corresponding triangle inequality is,
\begin{equation*}
	\Re(z^{th}_{j,l}){\mathfrak{T}\left(\mathcal{G}\right)}-\beta\big(\det\mathcal{L_G}\left[j,k\right] + \det\mathcal{L_G}\left[k,l\right]\big) \le 0,	
\end{equation*} 	
here  $z^{th}_{j,l} \in Z^{th}$. Whereas, if more than one node is in $\mathcal{U_{B}}$, then the corresponding inequality is,
\begin{equation*}
	\det\mathcal{L_G}\left[j,l\right]-\left(\det\mathcal{L_G}\left[j,k\right] + \det\mathcal{L_G}\left[k,l\right]\right) \le 0.
\end{equation*} 
Let the set of all such triangle inequalities be labeled as $\mathcal{C}_{\Delta}$. Construction of a set of Kalmanson's inequalities is an involved process and is explained in detail in section \ref{kalman_set_construct}.
\subsection{Construction of a set of Kalmanson's Inequalities}\label{kalman_set_construct}
Let $\mathbf{I}_{\mathcal{K}}=\big\{\{j_q, k_q, l_q, m_q\}:1 \le q \le |\mathbf{I}_{\mathcal{K}}| \big\}$ be the set of index quadruplets. For each quadruplet $\{j_w, k_w, l_w, m_w\} \in \mathbf{I}_{\mathcal{K}}$, corollary \ref{cor:cor8} yields four possible inequalities $\mathcal{K}^{{k_{w}l_{w}j_{w}m_{w}}}_{>0}$, $\mathcal{K}^{{k_{w}l_{w}j_{w}m_{w}}}_{<0}$, $\mathcal{K}^{{j_{w}k_{w}m_{w}l_{w}}}_{>0}$ and $\mathcal{K}^{{j_{w}k_{w}m_{w}l_{w}}}_{<0}$. From these, we construct four composite inequalities: 
\begin{enumerate*}
\item $\mathcal{K}^{w}_1=\mathcal{K}^{{k_{w}l_{w}j_{w}m_{w}}}_{>0} \land \mathcal{K}^{{j_{w}k_{w}m_{w}l_{w}}}_{>0},$
\item $\mathcal{K}^{w}_2=\mathcal{K}^{{k_{w}l_{w}j_{w}m_{w}}}_{>0} \land \mathcal{K}^{{j_{w}k_{w}m_{w}l_{w}}}_{<0},$
\item $\mathcal{K}^{w}_3=\mathcal{K}^{{k_{w}l_{w}j_{w}m_{w}}}_{<0} \land \mathcal{K}^{{j_{w}k_{w}m_{w}l_{w}}}_{>0},$
\item  $\mathcal{K}^{w}_4=\mathcal{K}^{{k_{w}l_{w}j_{w}m_{w}}}_{<0} \land \mathcal{K}^{{j_{w}k_{w}m_{w}l_{w}}}_{<0}$
\end{enumerate*}
Let $\mathcal{K}^{w}=\big\{\mathcal{K}^{w}_1,\mathcal{K}^{w}_2,\mathcal{K}^{w}_3,\mathcal{K}^{w}_4\big\}$ be a set of these composite inequalities for $w^{th}$ index quadruplet. We thus have $|\mathbf{I_{\mathcal{K}}} |$ such sets, one for each quadruplet in set $\mathbf{I}_{\mathcal{K}}$. We denote the collection of all such sets as $\mathcal{K}=\big\{\mathcal{K}^{q}: 1 \le q \le |\mathbf{I}_{\mathcal{K}}|\big\}$. 
From set $\mathcal{K}$, we construct a set $\mathcal{C_{K}}$, containing all possible combinations of composite inequalities drawn from the sets in $\mathcal{K}$, where each combination includes exactly one inequality from each set. Formally, this is expressed as $\mathcal{C_{K}}=\Big\{\big\{x_1 \land x_2 \land \ldots  \land x_{|\mathbf{I}_{\mathcal{K}}|}\big\}:x_1\in \mathcal{K}^{1},x_2\in \mathcal{K}^{2},\ldots,x_{|\mathbf{I}_{\mathcal{K}}|}\in \mathcal{K}^{|\mathbf{I}_{\mathcal{K}}|}\,\mbox{and}\,x_j \ne x_k \forall j \ne k\Big\}$. In set $\mathcal{C_{K}}$, only one of the combination, denoted $\mathbf{c}_{\mathcal{K}}^{\star}\in \mathcal{C_{K}}$ is a true composite inequality which satisfies the unknown network $\Gamma$.

The Kalmanson's inequality associated with a quadruplet $\{j_w, k_w, l_w, m_w\} \in \mathbf{I}_{\mathcal{K}}$ depends on the sign of boundary conditions $e_{k_{w}l_{w}}\mathcal{L}_\mathcal{G}^{\dagger}e_{j_{w}m_{w}}$ and $e_{j_{w}k_{w}}\mathcal{L}_\mathcal{G}^{\dagger}e_{m_{w}l_{w}}$. However, these boundary conditions are not directly computable due to the unavailability of certain boundary nodes within the quadruplet. Consequently, the true combination of inequalities $\mathbf{c}_{\mathcal{K}}^{\star} \in \mathcal{C_{K}}$ cannot be determined. While identifying $\mathbf{c}_{\mathcal{K}}^{\star}$ directly from the Thevenin impedance measurements in $Z^{th}$ is not feasible due to limited availability of measurements, we can construct a subset $\widehat{\mathcal{C}}_{\mathcal{K}} \subseteq {\mathcal{C}}_{\mathcal{K}}$ that is guaranteed to contain $\mathbf{c}_{\mathcal{K}}^{\star}$. This subset, $\widehat{\mathcal{C}}_{\mathcal{K}}=\Big\{\widehat{\mathbf{c}}_{\mathcal{K},i}: \widehat{\mathbf{c}}_{\mathcal{K},i}\, \mbox{satisfies} \, Z^{th} \Big\}$, is constructed by identifying those composite inequalities in $\mathcal{C_K}$ that are consistent with the available measurements. Hence, we aim is to first construct $\widehat{\mathcal{C}}_{\mathcal{K}}$ and use each composite inequality $\widehat{\mathbf{c}}_{\mathcal{K},i} \in \widehat{\mathcal{C}}_{\mathcal{K}}$ in the topology reconstruction algorithm, elucidated in the upcoming sections.

We first formulate the core problem of the topology reconstruction and set enumeration of a class of electrical networks in $\Gamma_{{\beta}}$. We utilize the relation in equation (\ref{imped_dist}) to construct a system of multivariate polynomial equations, i.e., for every Thevenins measurements $z^{th}_{j,k}$ $\mbox{where}\, j,k \in \mathcal{A}$, we have,
\begin{equation}
	z^{th}_{j,k}\det\mathcal{L_G}\left[k,k\right] - \gamma^{-1} \det \mathcal{L_G}\left[j,k\right]=0,
\end{equation}
here, $\Big|\Re\left(z^{th}_{j,k}\right)\Big|=\Big|\Im\left(z^{th}_{j,k}\right)\Big|$. Hence,
\begin{equation}\label{poly_measure}
	\Re\big(z^{th}_{j,k}\big)\det\mathcal{L_G}\left[k,k \right] - \beta \det \mathcal{L_G}\left[j,k\right]=0.
\end{equation}
Let $f_{jk}=\Re\left(z^{th}_{j,k}\right)\det\mathcal{L_G}\left[k,k\right] - \beta \det \mathcal{L_G}\left[j,k\right],\,\forall j,k \in \mathcal{A}$, be multivariate polynomials in elements of Laplacian matrix $\mathcal{L_{G}}$ and $\beta$. Therefore, let $\textbf{w}_l=\left[l_{12},\,\cdots, l_{1n},\,l_{23},\,\cdots \, l_{2n},\,\cdots\,l_{(n-1)n}\right]^\mathbf{T}\in\{0,1\}^{\frac{n\times(n-1)}{2}}$ be a vector of unknown Laplacian matrix   elements, and $\textbf{w}=\left[\textbf{w}_l^\mathbf{T}\,\beta \right]^\mathbf{T}$ be a vector of unknowns. Let a set $\mathcal{F}$ be the collection of all $\frac{|\mathcal{A}|\left(|\mathcal{A}|-1\right)}{2}$ equations, i.e., $\mathcal{F}=\left\{f_{jk}:\forall j,k \in \mathcal{A}\right\}$. From $\mathcal{F}$, we construct a system of polynomial equations $\mathcal{F}\left(\textbf{w}\right)=0$, constrained by the triangle and the Kalmanson's inequality. An additional constraint, i.e., $\mathfrak{T}(\mathcal{G})>0$ is imposed to ensure the connectedness of networks.
\begin{prob}\label{prob:prob1}
	Compute a set of unknown Laplacian matrix $\mathcal{L}_{\Gamma}$ using Thevenin's impedance measurements in set $Z^{th}$. Therefore,   
	\begin{equation*}\label{prob:problem1}	
		\begin{array}{l}
		\begin{array}{l}
				\rm{\textit{Solve\,for}\,\, \textbf{\textbf{w}}}\\
				{\mbox{Subject to}}\\
				\quad \quad{\mathcal{F}}\left( \textbf{w} \right) = 0,
				\quad{\textbf{w}_{l}\odot\left(\textbf{w}_{l}-\mathbf{1}\right)=0},\\ 
				\quad\,\,\,\,\,\, {\mathfrak{T}(\mathcal{G})>0},
				\quad \beta  > 0,
				\quad \mathcal{C}_{\Delta}, \\
			\end{array} \quad \quad \quad \quad \left( \mathcal{P}_i^{\star} \right) \\
			\quad \quad \quad \widehat{\mathbf{c}}_{\mathcal{K},{i}} \in \widehat{\mathcal{C}}_{\mathcal{K}}.
		\end{array} 
	\end{equation*}
	\noindent The problem $\mathcal{P}_i^{\star}$ is solved for every $\widehat{\mathbf{c}}_{\mathcal{K},{i}} \in \widehat{\mathcal{C}}_{\mathcal{K}}$.

	\noindent The solution to such a constrained system of polynomial equations is a set of valid electrical networks satisfying the measurements in set $Z^{th}$.
\end{prob}
\noindent Let the above constrained problem be denoted as $\mathcal{P}_i^{\star}$ and the subproblem of $\mathcal{P}_i^{\star}$ labeled as $\mathcal{P}$ is, 
\begin{equation*}\label{prob:problem1aux}	
	\begin{array}{l}
		\begin{array}{l}
			\rm{\textit{Solve\,for}\,\, \textbf{\textbf{w}}}\\
			{\mbox{Subject to}}\\
			\quad \quad{\mathcal{F}}\left( \textbf{w} \right) = 0,
			\quad{\textbf{w}_{l}\odot\left(\textbf{w}_{l}-\mathbf{1}\right)=0},\\ 
			\quad\,\,\,\,\,\, {\mathfrak{T}(\mathcal{G})>0},
			\quad \beta  > 0,
			\quad \mathcal{C}_{\Delta}.\\
		\end{array} \quad \quad \quad \quad  \left( \mathcal{P} \right) \\
	\end{array}
\end{equation*}
\noindent To construct $\widehat{\mathcal{C}}_{\mathcal{K}}$, an augmented problem $\mathcal{P}_{aug}$, constructed using $\mathcal{P}$, is solved. The feasibility of $\mathcal{P}_{aug}$ helps determine the set of valid Kalmanson's constraint $\widehat{\mathcal{C}}_{\mathcal{K}}$.
Construction of $\widehat{\mathcal{C}}_{\mathcal{K}}$ is explained below in detail. 
\subsection{Construction of $\widehat{\mathcal{C}}_{\mathcal{K}}$: A Combinatorial Approach}
An augmented problem $\mathcal{P}^w_{aug,i}=\mathcal{P} \cup \mathcal{K}^{w}_{i}$ is defined as follows, 
\begin{equation*}\label{prob:problem1}	
	\begin{array}{l}
		\begin{array}{l}
			\rm{\textit{Solve\,for}\,\, \textbf{\textbf{w}}}\\
			{\mbox{Subject to}}\\
			\quad \quad{\mathcal{F}}\left( \textbf{w} \right) = 0,
			\quad{\textbf{w}_{l}\odot\left(\textbf{w}_{l}-\mathbf{1}\right)=0},\\ 
			\quad\,\,\,\,\,\, {\mathfrak{T}(\mathcal{G})>0},
			\quad \beta  > 0,
			\quad \mathcal{C}_{\Delta},\\
		\end{array} \quad \quad \quad  \left( \mathcal{P}^w_{aug,i} \right) \\
		\quad \quad \quad \mathcal{K}^{w}_{i}.
	\end{array}
\end{equation*}
To construct a set of valid Kalmansons constraints $\widehat{\mathcal{C}}_{\mathcal{K}}$, we employ a two-stage process. In the first stage, as detailed in Algorithm \ref{algo:constructC_hat_K1}, we solve $\mathcal{P}^w_{aug,i},\,\forall \, 1\le i \le 4$ and $\forall \,1 \le w \le |\mathbf{I}_{\mathcal{K}}|$, retaining only the consistent composite inequalities $\mathcal{K}^{w}_{i}$ to form the set $\widehat{\mathcal{K}}$.
\renewcommand{\algorithmicrequire}{\textbf{Input:}}
\renewcommand{\algorithmicensure}{\textbf{Output:}}
\begin{algorithm}[h]
	\caption{Stage-1 }\label{algo:constructC_hat_K1}
	\begin{algorithmic}[1]
		\REQUIRE $\mathcal{K}=\Big\{\mathcal{K}^{q}: 1 \le q \le |\mathbf{I}_{\mathcal{K}}|\Big\}$.
		\ENSURE $\widehat{\mathcal{K}}=\Big\{\widehat{\mathcal{K}}^{q}: 1 \le q \le |\mathbf{I}_{\mathcal{K}}|\Big\}$
		\FOR{$q=1\,\mbox{to}\,|\mathbf{I}_{\mathcal{K}}|$}\label{step:iter_q}
		\FOR{$i=1\,\mbox{to}\,4$}\label{step:iter_ii}
		\STATE Solve  $\mathcal{P}^q_{aug,i}=\mathcal{P} \cup \mathcal{K}^{q}_{i}$ \label{step:solve_augg}
		\IF{$\mathcal{P}^q_{aug,i}$ is consistent}
		\STATE $\widehat{\mathcal{K}}^q \leftarrow \mathcal{K}^{q}_{i}$
		\ELSE
		\STATE Go to step \ref{step:iter_ii}
		\ENDIF
		\ENDFOR
		\ENDFOR
	\end{algorithmic}
\end{algorithm}

\noindent In the second stage (Algorithm \ref{algo:constructC_hat_K2}), we first construct an auxiliary set ${\mathcal{C}}^{aux}_{\mathcal{K}}$ from $\widehat{\mathcal{K}}$ in a manner analogous to the construction of $\mathcal{C}_{\mathcal{K}}$ from $\mathcal{K}$. We then formulate and solve auxiliary augmented problems $\mathcal{P}^{aux}_{aug,i}=\mathcal{P} \, \cup \, {\mathbf{c}}^{aux}_{\mathcal{K},i}, \forall\, 1 \le i \le |{\mathcal{C}}^{aux}_{\mathcal{K}}|\, \mbox{and} \, {\mathbf{c}}^{aux}_{\mathcal{K},i} \in {\mathcal{C}}^{aux}_{\mathcal{K}}$. The feasible problems yield valid composite inequalities, which are collected in the final set $\widehat{\mathcal{C}}_{\mathcal{K}}$. This completes the construction of set $\widehat{\mathcal{C}}_{\mathcal{K}}$.
\begin{algorithm}[h]
	\caption{Stage-2 }\label{algo:constructC_hat_K2}
	\begin{algorithmic}[1]
		\REQUIRE ${\mathcal{C}}^{aux}_{\mathcal{K}}$.\vspace{0.08cm}
		\ENSURE $\widehat{\mathcal{C}}_{\mathcal{K}}$		
		\FOR{$i=1\,\mbox{to}\,|{\mathcal{C}}^{aux}_{\mathcal{K}}|$}\label{step:iter_i}
		\STATE Solve  $\mathcal{P}^{aux}_{aug,i}=\mathcal{P} \, \cup \, {\mathbf{c}}^{aux}_{\mathcal{K},i}$ \label{step:solve_aug}
		\IF{$\mathcal{P}^{aux}_{aug,i}$ is consistent}
		\STATE $\widehat{\mathcal{C}}_{\mathcal{K}} \leftarrow {\mathbf{c}}^{aux}_{\mathcal{K},i}$
		\ELSE
		\STATE Go to step \ref{step:iter_i}
		\ENDIF
		\ENDFOR
	\end{algorithmic}
\end{algorithm}

The system of multivariate polynomial equations $\mathcal{F}(\textbf{w})$ are high-degree polynomials, and computing solutions to such a system of equations is computationally demanding. To enable efficient and simple algorithmic solutions, we compute a special set of multivariate basis polynomials ${\mathcal{F}_{\mathbf{G}}}$, known as the \emph{Gr\"obner basis polynomials}, corresponding to $\mathcal{F}$. Specifically, in our problem, we compute the Gr\"obner basis for the extended set $\mathcal{F}_{\mathbf{e}}=\big\{\mathcal{F}\left( \textbf{w} \right),\, \textbf{w}_{l}\odot\left(\textbf{w}_{l}-\mathbf{1}\right)\big\}$, and utilize the resulting Gr\"obner basis in problems $\mathcal{P}^{\star}_{i}$, $\mathcal{P}^{w}_{aug,i}$ and $\mathcal{P}^{aux}_{aug,i}$. A brief overview of Gröbner bases and relevant terminology is presented in the next section. For a comprehensive treatment of Gröbner bases, the reader is referred to \cite{cox1997ideals}, while a concise overview is provided in \cite{buchberger2001grobner}.
\subsection{Gr\"{o}bner Basis} \label{groebner}
Consider a set of polynomials $\mathcal{F} \subset \mathbb{R}\left[ {{l_{12}}, \cdots ,{l_{\left( {n - 1} \right)n},\beta}} \right]$. The ideal of the set of polynomials $\mathcal{F}$, labeled as $\langle \mathcal{F} \rangle$, is defined as;
\begin{defn}\label{defn:ideal}[Ideal of $\mathcal{F}$]
	\begin{equation*}
		\begin{array}{l}
			\left\langle \mathcal{F} \right\rangle  =\\ \left\{ {\sum\limits_{i = 1}^n {{h_i}{f_i}:{h_1}, \ldots, {h_n} \in \mathbb{R}\left[ {{l_{12}}, \cdots ,{l_{\left( {n - 1} \right)n},\beta}} \right]} ,\forall {f_i} \in \mathcal{F}} \right\}.
		\end{array}
	\end{equation*}
\end{defn}
\noindent The solutions satisfying the systems of polynomial equation $\mathcal{F}\left(\textbf{w}\right)=0$ is a set $\mathbf{V}\left(\mathcal{F}\right)$, referred to as the Variety of set $\mathcal{F}$, and is defined as,
\begin{defn}\label{defn:variety}[Variety of set $\mathcal{F}$]
	$\mathbf{V}\left(\mathcal{F}\right)$ is called the affine variety generated by the set of polynomials  $\mathcal{F}$, which is a set,
	\begin{equation}
		\mathbf{V}\left( \mathcal{F} \right) = \left\{ \begin{array}{l}
			\left( {{l_{12}}, \ldots ,{l_{\left( {n - 1} \right)n}},\beta } \right):\\
			{f_i}\left( {{l_{12}}, \ldots ,{l_{\left( {n - 1} \right)n}},\beta } \right) = 0,\forall {f_i} \in \mathcal{F}
		\end{array} \right\}.
	\end{equation}
\end{defn}
\noindent From set $\mathcal{F}$ we compute a set of basis polynomials known as the Gr\"obner basis $\mathcal{F}_{\mathbf{G}}\subset \mathbb{R}\left[ {l_{12}}, \cdots ,{l_{ij}}, \cdots ,l_{n(n-1)},\beta \right]$. The polynomials in ${\mathcal{F}_{\mathbf{G}}}$ allows simple algorithmic solution for computing $\mathcal{F}\left(\textbf{w}\right)=0$. The affine varieties $\mathbf{V}\left( \mathcal{F} \right)$ and $\mathbf{V}\left( \mathcal{F}_{\mathbf{G}} \right)$ are related, which is elucidated below,
\begin{lem}\label{lem:variety1}\cite{cox1997ideals}
	Let $\mathcal{F}$ and ${\mathcal{F}_{\mathbf{G}}}$ be a set of polynomials in $\mathbb{R}\left[ {l_{12}}, \cdots ,{l_{ij}}, \cdots ,l_{n(n-1)},\beta \right]$, and ${\mathcal{F}_{\mathbf{G}}}$ is the Gr\"obner basis of $\mathcal{F}$. Then, $\left\langle\mathcal{F} \right\rangle=\left\langle{\mathcal{F}_{\mathbf{G}}}\right\rangle$.
\end{lem}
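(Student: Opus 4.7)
The plan is to establish the two inclusions $\langle \mathcal{F}_{\mathbf{G}} \rangle \subseteq \langle \mathcal{F} \rangle$ and $\langle \mathcal{F} \rangle \subseteq \langle \mathcal{F}_{\mathbf{G}} \rangle$ separately, using only the defining property of a Gr\"obner basis together with the multivariate polynomial division algorithm in $\mathbb{R}[l_{12},\ldots,l_{n(n-1)},\beta]$ with respect to a fixed monomial order.

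For the first inclusion, I would appeal directly to the definition: a Gr\"obner basis $\mathcal{F}_{\mathbf{G}}$ of $\mathcal{F}$ is a finite subset of the ideal $\langle \mathcal{F} \rangle$ whose leading terms generate the ideal of leading terms $\langle \mathrm{LT}(\langle \mathcal{F} \rangle) \rangle$. Since every element of $\mathcal{F}_{\mathbf{G}}$ already lies in $\langle \mathcal{F} \rangle$, any ideal-theoretic combination of elements of $\mathcal{F}_{\mathbf{G}}$ also lies in $\langle \mathcal{F} \rangle$, giving $\langle \mathcal{F}_{\mathbf{G}} \rangle \subseteq \langle \mathcal{F} \rangle$. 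Operationally this is also visible in Buchberger's algorithm: each polynomial appended to the basis is an $S$-polynomial reduction of existing members, hence an $\mathbb{R}[\mathbf{w}]$-combination of elements of $\mathcal{F}$.

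For the reverse inclusion, I would take an arbitrary $f \in \langle \mathcal{F} \rangle$ and apply the multivariate division algorithm to divide $f$ by the ordered tuple of polynomials in $\mathcal{F}_{\mathbf{G}}$, obtaining $f = \sum_{g \in \mathcal{F}_{\mathbf{G}}} q_g\, g + r$ where the remainder $r$ has the property that no term of $r$ is divisible by any $\mathrm{LT}(g)$ for $g \in \mathcal{F}_{\mathbf{G}}$. Rearranging, $r = f - \sum q_g g$ lies in $\langle \mathcal{F} \rangle$. Suppose for contradiction that $r \neq 0$; then $\mathrm{LT}(r) \in \langle \mathrm{LT}(\langle \mathcal{F} \rangle) \rangle = \langle \mathrm{LT}(\mathcal{F}_{\mathbf{G}}) \rangle$ by the Gr\"obner basis property, so $\mathrm{LT}(r)$ must be divisible by some $\mathrm{LT}(g)$, contradicting the reducedness of $r$. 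Hence $r=0$ and $f \in \langle \mathcal{F}_{\mathbf{G}} \rangle$.

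Combining both inclusions yields the equality $\langle \mathcal{F} \rangle = \langle \mathcal{F}_{\mathbf{G}} \rangle$. The only nontrivial step is the reverse inclusion, whose sole subtlety is invoking the leading-term characterization of a Gr\"obner basis correctly to force the remainder to vanish; everything else is bookkeeping with the division algorithm. Since this lemma is quoted from \cite{cox1997ideals}, a brief two-sentence proof along these lines (or simply a pointer to the reference) should suffice in the paper.
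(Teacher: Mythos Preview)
Your argument is correct and is precisely the standard proof found in the cited reference \cite{cox1997ideals}: the inclusion $\langle \mathcal{F}_{\mathbf{G}}\rangle \subseteq \langle \mathcal{F}\rangle$ is immediate from $\mathcal{F}_{\mathbf{G}}\subseteq\langle\mathcal{F}\rangle$, and the reverse inclusion follows from the division algorithm together with the leading-term property forcing the remainder to vanish. The paper itself supplies no proof at all for this lemma---it simply records the statement with the citation---so your final remark is exactly right: a pointer to \cite{cox1997ideals} is all that is needed here, and your written-out argument, while sound, would be more detail than the paper intends.
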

\begin{lem}\label{lem:variety2}\cite{cox1997ideals}
	Let $\mathcal{F}$ and $\mathcal{F}_{\mathbf{G}}$ be a set of polynomials in $\mathbb{R}\left[ l_{12}, \cdots, l_{ij}, \cdots, l_{n(n-1)}, \beta \right]$, and $\mathcal{F}_{\mathbf{G}}$ is the Gr\"obner basis of $\mathcal{F}$. Then, from Lemma \ref{lem:variety1}, $\left\langle\mathcal{F} \right\rangle=\left\langle{\mathcal{F}_{\mathbf{G}}}\right\rangle$ is true. Therefore, $\mathbf{V}\left(\mathcal{F}\right) =\mathbf{V}\left( \mathcal{F}_{\mathbf{G}} \right)$.
\end{lem}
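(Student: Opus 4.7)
The plan is to deduce the equality of varieties directly from the equality of ideals given by Lemma \ref{lem:variety1}, via the more general fact that a finite generating set and the ideal it generates cut out the same variety. Concretely, I would first prove the auxiliary statement: for any set of polynomials $\mathcal{H} \subset \mathbb{R}[l_{12},\ldots,l_{n(n-1)},\beta]$, one has $\mathbf{V}(\mathcal{H}) = \mathbf{V}(\langle \mathcal{H} \rangle)$. Applying this auxiliary statement to both $\mathcal{F}$ and $\mathcal{F}_{\mathbf{G}}$ and then invoking Lemma \ref{lem:variety1} delivers the conclusion in a single chain of equalities.

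For the auxiliary statement, I would argue by two inclusions. The inclusion $\mathbf{V}(\langle \mathcal{H} \rangle) \subseteq \mathbf{V}(\mathcal{H})$ is immediate from the definition of the ideal, since $\mathcal{H} \subseteq \langle \mathcal{H} \rangle$: any common zero of all polynomials in $\langle \mathcal{H} \rangle$ is in particular a common zero of all polynomials in $\mathcal{H}$. For the reverse inclusion, fix $\mathbf{w} \in \mathbf{V}(\mathcal{H})$ and let $g \in \langle \mathcal{H} \rangle$ be arbitrary. By Definition \ref{defn:ideal}, $g$ admits a representation $g = \sum_{i=1}^{n} h_i f_i$ with $f_i \in \mathcal{H}$ and $h_i \in \mathbb{R}[l_{12},\ldots,l_{n(n-1)},\beta]$. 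Evaluating at $\mathbf{w}$ gives $g(\mathbf{w}) = \sum_i h_i(\mathbf{w}) f_i(\mathbf{w}) = 0$, because each $f_i(\mathbf{w}) = 0$ by the choice of $\mathbf{w}$. Hence $\mathbf{w}$ is a common zero of every element of $\langle \mathcal{H} \rangle$, so $\mathbf{w} \in \mathbf{V}(\langle \mathcal{H} \rangle)$.

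Having established the auxiliary statement, the conclusion follows in one line: by the auxiliary statement applied twice, $\mathbf{V}(\mathcal{F}) = \mathbf{V}(\langle \mathcal{F} \rangle)$ and $\mathbf{V}(\mathcal{F}_{\mathbf{G}}) = \mathbf{V}(\langle \mathcal{F}_{\mathbf{G}} \rangle)$, while Lemma \ref{lem:variety1} gives $\langle \mathcal{F} \rangle = \langle \mathcal{F}_{\mathbf{G}} \rangle$. Chaining these yields $\mathbf{V}(\mathcal{F}) = \mathbf{V}(\mathcal{F}_{\mathbf{G}})$. There is no serious obstacle here: the proof is essentially unwinding the definitions of ideal and variety and combining them with the already-stated Lemma \ref{lem:variety1}. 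The only care needed is to make sure the double inclusion argument is stated cleanly, as this is the pivot on which the result rests.
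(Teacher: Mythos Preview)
Your proposal is correct. The paper does not supply its own proof of this lemma: it is stated with a citation to \cite{cox1997ideals}, and the statement itself already contains the entire argument sketch (equal ideals via Lemma~\ref{lem:variety1}, hence equal varieties). Your write-up simply fills in the standard double-inclusion justification of $\mathbf{V}(\mathcal{H})=\mathbf{V}(\langle\mathcal{H}\rangle)$, which is exactly the missing step and is handled correctly.
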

\noindent Now consider problem $\mathcal{P}^{\star}_{i}$, and construct an extended set of polynomials $\mathcal{F}_\mathbf{e}=\big\{\mathcal{F}\left( \textbf{w} \right),\, \textbf{w}_{l}\odot\left(\textbf{w}_{l}-\mathbf{1}\right)\big\}$. By computing the Gröbner basis polynomials $\mathcal{F}_{\mathbf{Ge}}$ corresponding to $\mathcal{F}_\mathbf{e}$, we leverage Lemmas \ref{lem:variety1} and \ref{lem:variety2} to establish that $\mathbf{V}\big(\mathcal{F}_{\mathbf{e}}\big) =\mathbf{V}\big( \mathcal{F}_{\mathbf{Ge}}\big)$. This allows us to reformulate problem $\mathcal{P}_{i}^{\star}$ for each $1 \le i \le |\widehat{\mathcal{C}}_{\mathcal{K}}|$ as follows:
\begin{equation*}\label{prob:problem1reform}	
	\begin{array}{l}
		\begin{array}{l}
			\rm{\textit{Solve\,for}\,\, \textbf{\textbf{w}}}\\
			{\mbox{Subject to}}\\
			\quad \quad{\mathcal{F}_{\mathbf{Ge}}\left(\textbf{w}\right)=0},\\ 
			\quad\,\,\,\,\,\, {\mathfrak{T}(\mathcal{G})>0},
			\quad \beta  > 0,
			\quad \mathcal{C}_{\Delta},\\
		\end{array} \quad \quad \quad \quad \quad \quad \quad  \left( \mathcal{P}_{\mathbf{G},i}^{\star} \right) \\
		\quad \quad \quad\, \widehat{\mathbf{c}}_{\mathcal{K},i} \in \widehat{\mathcal{C}}_{\mathcal{K}}.
	\end{array}
\end{equation*} 
Let this reformulation of the problem $\mathcal{P}^{\star}_{i}$ be called $\mathcal{P}^{\star}_{\mathbf{G},i}$, and the subproblem of $\mathcal{P}_{\mathbf{G},i}^{\star}$ labeled as $\mathcal{P}_{\mathbf{G}}$ is, 
\begin{equation*}\label{prob:problem1reform11}	
	\begin{array}{l}
		\begin{array}{l}
			\rm{\textit{Solve\,for}\,\, \textbf{\textbf{w}}}\\
			{\mbox{Subject to}}\\
			\quad \quad{\mathcal{F}_{\mathbf{Ge}}\left(\textbf{w}\right)=0},\\ 
			\quad\,\,\,\,\,\, {\mathfrak{T}(\mathcal{G})>0},
			\quad \beta  > 0,
			\quad \mathcal{C}_{\Delta}.\\
		\end{array}  \quad \quad \quad \quad \quad \quad \quad  \left( \mathcal{P}_{\mathbf{G}} \right) \\
	\end{array}
\end{equation*} 
\noindent The subproblems $\mathcal{P}_{\mathbf{G}}$ and $\mathcal{P}$ are both equivalent since $\mathbf{V}\left(\mathcal{P}\right)=\mathbf{V}\left(\mathcal{P}_{\mathbf{G}}\right)$. Therefore, we can substitute $\mathcal{P}_{\mathbf{G}}$ for $\mathcal{P}$ in both $\mathcal{P}^w_{aug,i}$ and $\mathcal{P}^{aux}_{aug,i}$ to construct $\widehat{\mathcal{C}}_{\mathcal{K}}$.

\noindent The computation of Gr\"obner basis is a standard procedure and is not explained here for brevity; for more details on the computation of Gr\"obner basis, refer to \cite{cox1997ideals}. The Groebner basis computations in this paper were done with \cite{matlab2012matlab}.
% {\mathcal{F}_{\mathbf{Ge}}\left(\textbf{w}\right)=0}

\noindent Now, consider a solution $\mathbf{w}$ of $\mathcal{P}_{\mathbf{G},i}^{\star}$ and let $\mathcal{L}_{\Gamma,\mathbf{w}}=\gamma\mathcal{L}_{\mathcal{G}}$ denote the corresponding Laplacian matrix, where $\mathcal{L}_{\mathcal{G}}$ is obtained through a linear transformation $\mathcal{T}$ that maps the vector of Laplacian elements $\mathbf{w}_l$ to a matrix.  
\noindent The problem $\mathcal{P}^{\star}_{\mathbf{G},i}$ may admit multiple solutions, therefore a solution set $\mathcal{N}^i= \Big\{\mathcal{L}^i_{\Gamma,\textbf{w}} \in {\mathbb{R}^{n \times n}}:\textbf{w}\,\mbox{satisfies the problem}\, \mathcal{P}^{\star}_{\mathbf{G},i}\Big\}.$ Each solution set $\mathcal{N}^i \subseteq \mathbf{V}\left(\mathcal{F}_{\mathbf{Ge}}\right)$ and there are $|\widehat{\mathcal{C}}_{\mathcal{K}}|$ such solution sets.  \textit{The problem then becomes to compute all the sets $\mathcal{N}^1\,,\mathcal{N}^2,\,\mathcal{N}^3,\, \ldots ,\, \mathcal{N}^{|\widehat{\mathcal{C}}_{\mathcal{K}}|}$, which entails solving problems $\mathcal{P}^{\star}_{\mathbf{G},1}\,, \mathcal{P}^{\star}_{\mathbf{G},2}\,, \mathcal{P}^{\star}_{\mathbf{G},2}\,,\ldots,\,\mathcal{P}^{\star}_{\mathbf{G},|\widehat{\mathcal{C}}_{\mathcal{K}}|}$}.

\noindent To check for the consistency and the number of solutions to a system of polynomial equation ${\mathcal{F}_{\mathbf{Ge}}\left(\textbf{w}\right)=0}$, we define two terms, i.e., the Initial ideal and the standard monomials. A detailed discussion regarding this is done in section-\ref{sec:discuss}.
\begin{defn}[Leading Monomial]
	A leading monomial of a polynomial say $f \in \mathbb{R}\left[ {{l_{12}}, \cdots ,{l_{\left( {n - 1} \right)n},\beta}} \right]$, labeled as $LM\left(f\right)$, is a monomial with highest degree.	
\end{defn}
\begin{defn}[Initial Ideal of $\left\langle \mathcal{F} \right\rangle$] The initial ideal of a graded $\langle \mathcal{F} \rangle$ denoted as $in_{\prec} \big( \left\langle \mathcal{F} \right\rangle \big)$ is the ideal generated by leading monomials of polynomials in $\mathcal{F}$,  
	\begin{equation}
		in_{\prec} \big( \left\langle \mathcal{F} \right\rangle \big) = \left\langle LM\left(f\right) | f \in \left\langle \mathcal{F} \right\rangle \right\rangle.
	\end{equation}
\end{defn}
\noindent Graded ideal here refers to the defined monomial ordering. Finally, the standard monomial corresponding to the $in_{\prec} \left( \left\langle \mathcal{F} \right\rangle \right)$ is,
\begin{defn}[Standard monomial] A monomial say $l^{\alpha}$, is said to be a standard monomial if $l^{\alpha} \notin in_{\prec} \left( \left\langle \mathcal{F} \right\rangle \right)$. Here, $l^{\alpha}$ represents a monomial $l^{\alpha_1}_{12}l^{\alpha_2}_{13}\cdots l^{\alpha_m}_{(n-1)n}$, where $m=\frac{n(n-1)}{2}$ and each $\alpha_{i} \in \mathbb{Z}^{+}\, , \forall i \in \{1,2,\ldots,\frac{n(n-1)}{2}\}$. 	
\end{defn}
\noindent In the following section, we provide a concise overview of the topology reconstruction algorithm. Subsequently, Section \ref{sec:num_exmp} presents an illustrative example to demonstrate the application and effectiveness of the developed method. 
\section{Topology Reconstruction Algorithm}\label{sec:topoalgo}
\begin{comment}
	For topology reconstruction and set characterization consider an unknown network $\Gamma \in \Gamma_{\beta}^x$, where $x\in\left\{R,RL,RC\right\}$, we aim to reconstruct its topology and characterize the sets of electrical networks $\mathcal{N}^i,\, \forall 1 \le i \le |\widehat{\mathcal{C}}_{\mathcal{K}}|$ that are consistent with the available Thevenin impedance measurements $Z^{th}$. To achieve this, we solve the optimization problems $\mathcal{P}_{\mathcal{G},i}^{\star},\, \forall 1 \le i \le |\widehat{\mathcal{C}}_{\mathcal{K}}|$, as detailed in Section \ref{sec:prob_form}.
\end{comment}
A comprehensive flowchart illustrating the topology reconstruction algorithm for constructing $\mathcal{N}^{i},\, \forall 1 \le i \le |\widehat{\mathcal{C}}_{\mathcal{K}}|$ is presented in Figure \ref{fig:algoflow}.
\begin{figure}[h]
	\begin{tikzpicture}[node distance=2cm]
		\centering
		\node (start) [startstop, text width=4.5cm] {\large Construct $Z^{th}$ by performing experiments on unknown $\Gamma \in \Gamma^\beta$};
		
		\node (setF) [startstop, below of=start, text width=3cm] { \large Construct set $\mathcal{F}_{\mathbf{e}}$ for measurements in $Z^{th}$};
		
		\node (setFGe) [startstop, right of=setF, xshift=2cm, text width=3cm] {\large Compute the \\ Gr\"obner basis polynomials $\mathcal{F}_{\mathbf{Ge}}$};
		
		\node (setconstraint) [startstop, below of=setF, text width=2.5cm, xshift=-0.5cm] {\large Construct the set $\mathcal{C}_\Delta$};		
		
		\node (solve) [startstop, below of=setFGe, xshift=-1.4cm,text width=1cm] {\large Solve \\ $\mathcal{P}_{\mathcal{G},i}^{\star}$};
		
		\node (construct_C_hat_K) [startstop, right of=solve, xshift=0.8cm, text width=2cm] {\large Construct set  $\widehat{\mathcal{C}}_{\mathcal{K}}$};
		
		\node (N) [startstop, below of=solve] {\large Construct a set of all electrical network $\mathcal{N}^i$.};
		
		\draw [arrow] (start) -- (setF);
		\draw [arrow] (setF) -- (setFGe);
		\draw [arrow] (setFGe) -- (solve);
		\draw [arrow] (setconstraint) -- (solve);
		\draw [arrow] (construct_C_hat_K) -- (solve);
		\draw [arrow] (solve) -- (N);
	\end{tikzpicture}
	\caption{A flow of Topology reconstruction algorithm for constructing set $\mathcal{N}^{i}$.}
	\label{fig:algoflow}
\end{figure}
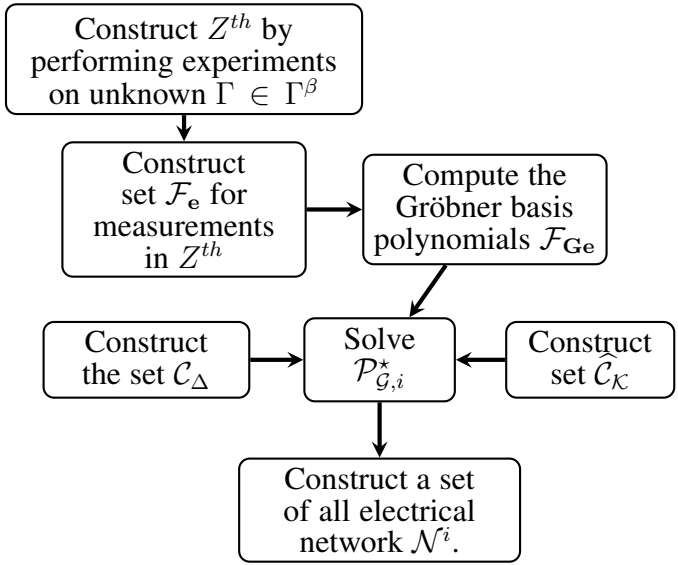
\section{Numerical Example}\label{sec:num_exmp}
To illustrate the developed algorithm, an unknown 5-node network $\Gamma$ is considered, with nodes 2 and 3 are inaccessible for boundary measurements, thus $\mathcal{U}=\{2,3\}$. The index sets are $\mathbf{I}_{\mathcal{K}}=\big\{\left\{1,2,3,4\right\},\,\left\{2,3,4,5\right\},\,\left\{3,4,5,1\right\},\,\left\{4,5,1,2\right\},\,\left\{5,1,2,3\right\}\big\}$ and $\mathbf{I}_{\Delta}=\big\{\left\{1,2,3\right\},\,\left\{2,3,4\right\},\,\left\{3,4,5\right\},\,\left\{5,1,2\right\},\,\left\{1,2,4\right\},$\\$\,\left\{1,3,4\right\},\,\left\{1,3,5\right\},\,\left\{2,3,5\right\},\,\left\{2,4,5\right\}\big\}$. The available Thevenin impedance measurement are $Z^{th}\left(\Gamma\right)=\big\{z^{th}_{1,4}=2.2857+i2.2857, z^{th}_{1,5}=$ $1.2381+i1.2381, z^{th}_{4,5}=1.2381+i1.2381\big\}$. The measurements indicate that the network $\Gamma$ is a RL network, i.e. $\Gamma \in \Gamma_{\beta}^{RL}$.  The Laplacian matrix $\mathcal{L}_{\Gamma}=\gamma \mathcal{L}_{\mathcal{G}} \in \mathbb{R}^{5 \times 5}$ is unknown, and hence, a vector of unknown elements of the Laplacian matrix $\mathcal{L}_{\mathcal{G}}$ is $\mathbf{w}_l=\left[l_{12}, l_{13},\ldots, l_{15}, l_{23},\ldots,l_{25},\ldots,l_{45} \right]^\mathbf{T}\in \{0,1\}^{10 \times 1}$, and $\mathbf{w}=\left[\mathbf{w}^{\mathbf{T}}_l\,\, \beta \right]^{\mathbf{T}}$ is a vector of unknowns. We construct polynomials $f_{14}=2.2857\det \mathcal{L}_{\mathcal{G}}\left[1,1\right]-\beta \det \mathcal{L}_{\mathcal{G}}\left[1,4\right]$, $f_{15}=1.2381\det \mathcal{L}_{\mathcal{G}}\left[1,1\right]-\beta \det \mathcal{L}_{\mathcal{G}}\left[1,5\right]$, $f_{45}=1.2381\det \mathcal{L}_{\mathcal{G}}\left[4,4\right]-\beta \det \mathcal{L}_{\mathcal{G}}\left[4,5\right]$, corresponding to each Thevenin's impedance measurement $z^{th}_{1,4}$, $z^{th}_{1,5}$ and $z^{th}_{4,5}$ respectively. The resulting set of multivariate polynomials $\mathcal{F}=\big\{f_{14},\,f_{15},\,f_{45} \big\}$ and the extended set of polynomials is $\mathcal{F}_\mathbf{e}=\big\{f_{14},\,f_{15},\,f_{45},\,\textbf{w}_{l}\odot\left(\textbf{w}_{l}-\mathbf{1}\right) \big\}$. Compute the Gr\"{o}bner basis polynomials $\mathcal{F}_\mathbf{Ge}$ for $\mathcal{F}_\mathbf{e}$ and construct the subproblem $\mathcal{P}_\mathbf{G}$, as described in section \ref{groebner}. From the index set $\mathbf{I}_{\mathcal{K}}$, we generate a set $\mathcal{K}=\big\{ \mathcal{K}^q:1 \le q \le 5 \big\}$ of composite Kalmanson's inequalities. Notably, $\mathcal{K}^1=\Big\{\mathcal{K}^{{2314}}_{>0} \land \mathcal{K}^{{1243}}_{>0},\,\mathcal{K}^{{2314}}_{>0} \land \mathcal{K}^{{1243}}_{<0},\,\mathcal{K}^{{2314}}_{<0} \land \mathcal{K}^{{1243}}_{>0},\,\mathcal{K}^{{2314}}_{<0} \land \mathcal{K}^{{1243}}_{<0} \Big\}$ corresponding to the quadruplet $\big\{1,2,3,4\big\} \in \mathbf{I}_{\mathcal{K}}$, with similar remaining sets $\mathcal{K}^2$, $\mathcal{K}^3$, $\mathcal{K}^4$, $\mathcal{K}^5$. Then, using set $\mathcal{K}$, we build a set $\mathcal{C_K}$ which has  $4^5$ composite inequalities, among which only one inequality, i.e., $\mathbf{c}^\star_{\mathcal{K}} \in \mathcal{C_K}$ is true. However, due to inaccessible boundary nodes, there may be multiple composite inequalities that satisfy the limitedly available boundary measurements $Z^{th}$, we therefore compute a subset $\widehat{\mathcal{C}}_{\mathcal{K}} \subseteq \mathcal{C}_{\mathcal{K}}$ that is guaranteed to contain $\mathbf{c}^\star_{\mathcal{K}}$. To determine the valid Kalmanson constraints $\widehat{\mathcal{C}}_{\mathcal{K}} \subseteq \mathcal{C}_{\mathcal{K}}$, we employ a two-stage process. \\
\textbf{Stage-1:} We iteratively solve augmented problem $\mathcal{P}^w_{aug,i}=\mathcal{P} \cup \mathcal{K}^w_i,\, \forall 1 \le i \le 4$ and $\forall 1 \le w \le 5$, as detailed in Algorithm \ref{algo:constructC_hat_K1}. This process identifies feasible composite inequalities $\mathcal{K}^w_i$ for each index quadruplet, which are then collected to form the set $\widehat{\mathcal{K}}$. In the current  example, $\widehat{\mathcal{K}}=\big\{\widehat{\mathcal{K}}^1, \widehat{\mathcal{K}}^2, \widehat{\mathcal{K}}^3, \widehat{\mathcal{K}}^4, \widehat{\mathcal{K}}^5\big\}$ is found to be,  
\begin{enumerate*}
	\item $\widehat{\mathcal{K}}^1=\big\{\mathcal{K}^{{2314}}_{>0} \land \mathcal{K}^{{1243}}_{>0},\,\mathcal{K}^{{2314}}_{<0} \land \mathcal{K}^{{1243}}_{<0}\big\}$,
	
	\item $\widehat{\mathcal{K}}^2=\big\{\mathcal{K}^{{3425}}_{>0} \land \mathcal{K}^{{2354}}_{>0},\,\mathcal{K}^{{2314}}_{<0} \land \mathcal{K}^{{1243}}_{<0}\big\}$,
	
	\item $\widehat{\mathcal{K}}^3=\big\{\mathcal{K}^{{4531}}_{>0} \land \mathcal{K}^{{3415}}_{>0}\big\}$,
	
	\item   $\widehat{\mathcal{K}}^4=\big\{\mathcal{K}^{{5142}}_{>0} \land \mathcal{K}^{{4521}}_{>0}\big\}$,
	
	\item   $\widehat{\mathcal{K}}^5=\big\{\mathcal{K}^{{1253}}_{>0} \land \mathcal{K}^{{5132}}_{>0},\,\mathcal{K}^{{1253}}_{<0} \land \mathcal{K}^{{5132}}_{<0}\big\}$.
\end{enumerate*}

\noindent \textbf{Stage-2:} We first construct an auxiliary set ${\mathcal{C}}^{aux}_{\mathcal{K}}$ from $\widehat{\mathcal{K}}$, which has 8 composite inequalities. Then, we solve auxiliary augmented problem $\mathcal{P}^{aux}_{aug,i}=\mathcal{P} \cup {\mathbf{c}}^{aux}_{\mathcal{K},i},\, \forall 1 \le i \le 8$, as described in Algorithm-\ref{algo:constructC_hat_K2}. The feasible solutions of these problems yield valid composite inequalities, which constitute the final set $\widehat{\mathcal{C}}_{\mathcal{K}} \subseteq \mathcal{C}_{\mathcal{K}}$. In this example, $\widehat{\mathcal{C}}_{\mathcal{K}}$ contain only one composite inequality namely: $\widehat{\mathcal{C}}_{\mathcal{K}}=\Big\{\mathcal{K}^{{2314}}_{>0} \land \mathcal{K}^{{1243}}_{>0}\land\mathcal{K}^{{3425}}_{>0} \land \mathcal{K}^{{2354}}_{>0}\land \mathcal{K}^{{4531}}_{>0} \land \mathcal{K}^{{3415}}_{>0}\land\mathcal{K}^{{5142}}_{>0} \land \mathcal{K}^{{4521}}_{>0}\Big\}$. Therefore, this single composite inequality is $\mathbf{c}_{\mathcal{K}}^{\star}=\widehat{\mathcal{C}}_{\mathcal{K}}$. The solution to problem $\mathcal{P}^{\star}_{\mathbf{G}}$ associated to this constraint $\widehat{\mathcal{C}}_{\mathcal{K}}$ is an electrical network $\widehat{\Gamma}$, shown in Figure \ref{fig:fig5}.
\begin{figure}[h]
	\centering
	\includegraphics*[scale=0.65]{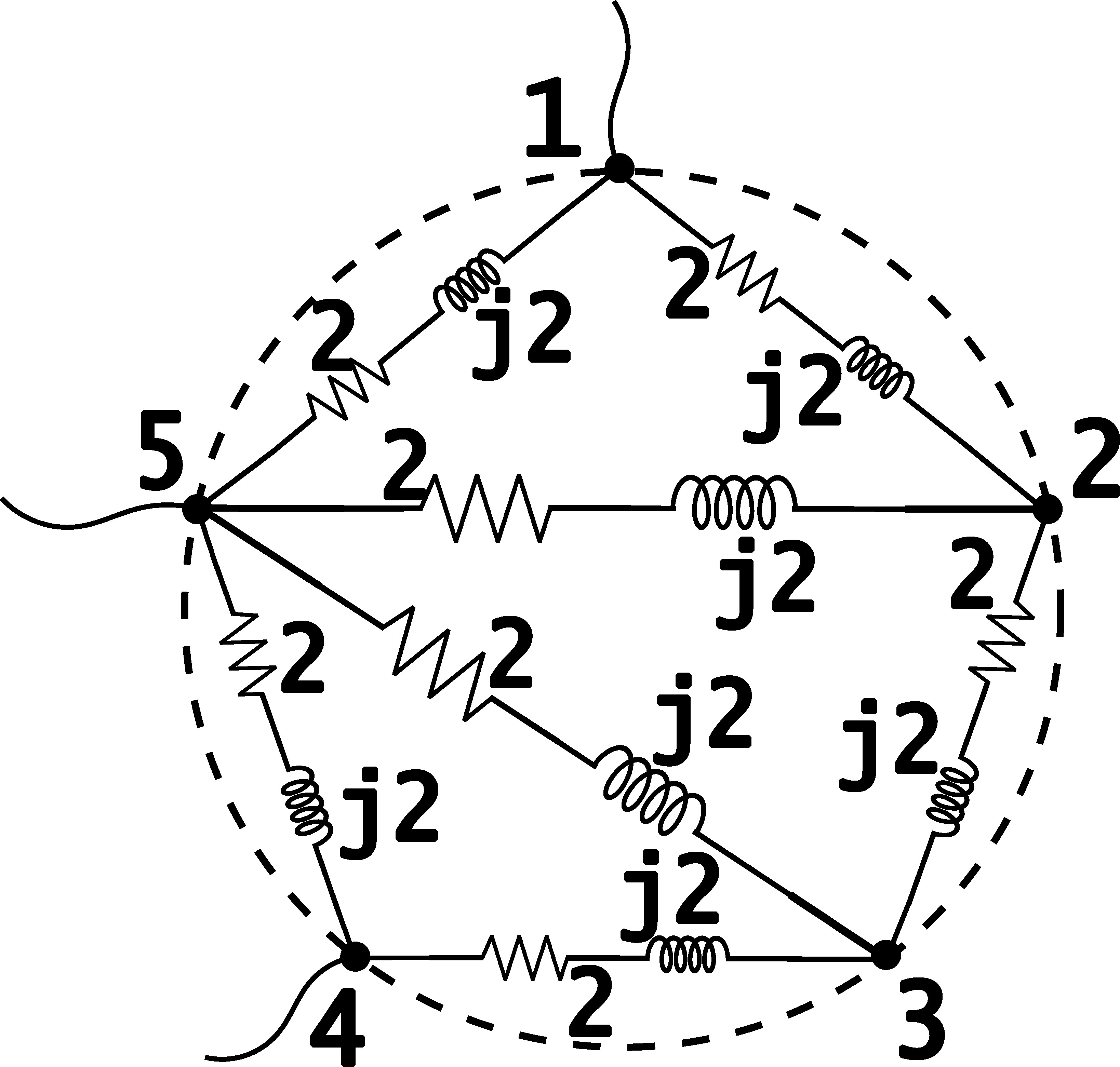}
	\caption{A valid electrical network $\widehat{\Gamma}$ in set $\mathcal{N}^1$}
	\label{fig:fig5}
\end{figure}
\noindent The set $\mathcal{N}^1$ contains only this reconstructed network $\widehat{\Gamma}$. Note that, without additional constraints, there are $|\mathbf{V}\big( \mathcal{F}_{\mathbf{Ge}}\big)|=148$ potential electrical network which satisfy the boundary measurements $Z^{th}$.

Now consider the same example, with $\mathcal{U}=\left\{2,3,4\right\}$. Then $Z^{th}=\left\{z^{th}_{1,5}=1.2381+i1.2381\right\}$. In this case, $\widehat{\mathcal{C}}_{\mathcal{K}} = \mathcal{C}_{\mathcal{K}}$, and solving problem $\mathcal{P}_{\mathbf{G},i}^{\star},\, \forall 1 \le i \le 4^5$ yields $4^5$ sets of valid electrical networks. Figure \ref{fig:figex} illustrates 18 valid networks from one solution set $\mathcal{N}^1$ corresponding to a specific composite inequality $\widehat{\mathcal{C}}_{\mathcal{K},1}=\Big\{\mathcal{K}^{{2314}}_{>0} \land \mathcal{K}^{{1243}}_{>0}\land\mathcal{K}^{{3425}}_{>0} \land \mathcal{K}^{{2354}}_{>0}\land \mathcal{K}^{{4531}}_{>0} \land \mathcal{K}^{{3415}}_{>0}\land\mathcal{K}^{{5142}}_{>0} \land \mathcal{K}^{{4521}}_{>0}\Big\} \in \widehat{\mathcal{C}}_{\mathcal{K}}$.  Among these, the boxed network represents the reconstructed network, identical to the original network $\Gamma$.
\begin{figure}[h]
	\centering
	\includegraphics*[scale=0.6]{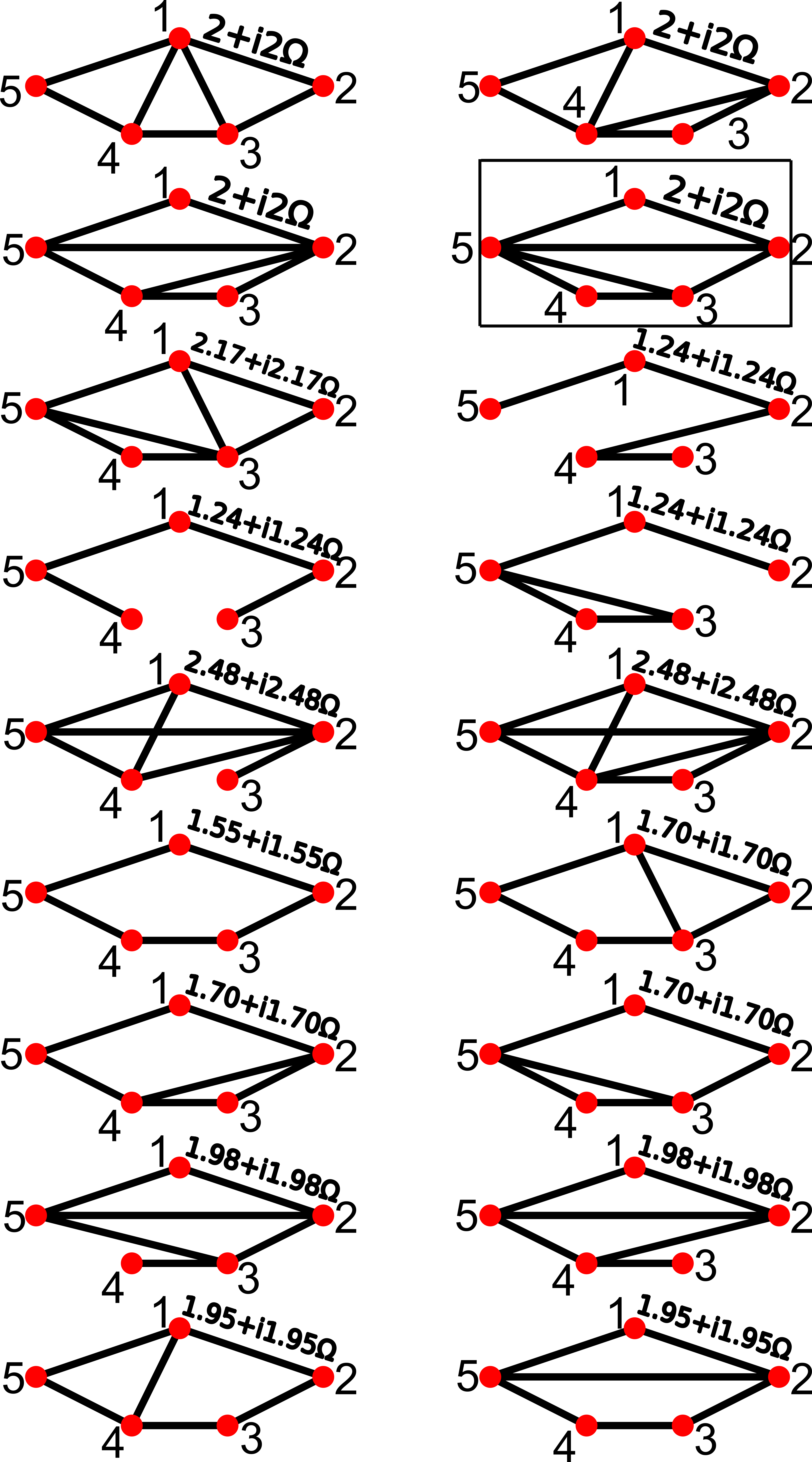}
	\caption{A valid electrical network in set $\mathcal{N}^1$}
	\label{fig:figex}
\end{figure}
\section{Discussion}\label{sec:discuss}
\subsection{Number of Reconstructed Electrical Network}
Let the number of boundary nodes be $|\mathcal{A}|=n_{A}$. Then, the number of multivariate polynomial equations $|\mathcal{F}|$ are $\frac{n_{A}\left(n_{A}-1\right)}{2}$, and the number of unknowns, i.e., the elements of the unknown Laplacian matrix $\mathcal{L_{G}}$, are $\frac{n(n-1)}{2}$,   
\begin{itemize}
	\item The system of polynomial equations $\mathcal{F}(\textbf{w})=0$ always admits trivial solutions with $\mathbf{w}_l=\textbf{0}$.
	\item If $n_A=n$ and one of the variable in $\textbf{w}$ (assume $\beta$) is known, then the number of polynomial equations is equal to number of unknowns. According to B\'ezout's theorem \cite{garcia1980number}, the number of solutions to such $n$ system of polynomial equations in $n$ variable are bounded by $\prod\limits_{i = 1}^n {{d_i}}$, where $d_i$ is the degree of $i^{th}$ polynomial. A better estimate on the bound of the number of solutions to $n$ system of polynomial equations in $n$ variable is given by Bernstein-Kushnirenko-Khovanskii (BKK) theorem\cite{bernshtein1975number}, which states that the number of isolated solutions are atmost the mixed volume of newton polytope corresponding to $n$ polynomial equations. Computation of mixed volume for a high degree multivariate polynomial with many monomials is a difficult task, and is an active area of research \cite{lee2011mixed}. 
	\item If $n_A < n$, we have under-determined system of polynomial equation. The B\'ezouts and BKK theorem does not give conclusive results in such cases. To study the number of solutions we examine the number of standard monomials of  $\langle\mathcal{F}\rangle$. The solution set $\mathbf{V}\left(\mathcal{F}\right)$ is finite iff the set of standard monomials is finite, in which case $|\mathbf{V}\left(\mathcal{F}\right)|$ is equal to number of standard monomials. In this scenario, the solutions in $\mathbf{V}\left(\mathcal{F}_\mathbf{e}\right)$ that satisfy the given constraints constitute the set of valid electrical networks.
\end{itemize}
\subsection{Computational Complexity}
The worst case complexity of computing Gr\"obner basis is doubly exponential with respect to number of variables \cite{bardet2015complexity}. The computational complexity is influenced by two key factors, i.e., 1) degree of polynomial and 2) term ordering.
\begin{enumerate}
	\item The higher the polynomial degree, the higher the computational complexity. 
	\item Term ordering is a way to rank the monomials of the given polynomial. Different choices of term orderings can lead to different Gr\"obner bases; these bases have different numbers of monomials. Thus, the term order directly impacts the computation of $\mathcal{S}$ polynomial in the Buchberger algorithm \cite{cox1994ideals}. There is no standard method of determining an optimal term ordering for a given problem.
\end{enumerate}

\begin{comment}
	With an increase in degree by 1, there is an exponential increase in the number of monomials, which increases the computational complexity of computing the Gr\"obner basis.
\end{comment}

\noindent To reduce the computational complexity, one can incorporate apriori knowledge about the network structure. Given our focus on the topology reconstruction of planar networks, we exploit the fact that a maximal planar graph with $n_b$ boundary nodes has $\left(3n_b-6\right)$ edges. For example, with $n_b=5$, a maximal planar network has $9$ edges, 
let this maximal planar network be $\Gamma_5^{max}=\left(\mathcal{G}^{max}_5,\gamma\right)$. Here, $\mathcal{G}^{max}_5$ is an underlying maximal planar graph $\left(\mbox{known to us}\right)$. Let $\Gamma_5=\left(\mathcal{G}_5,\gamma\right)$ be an unknown network which is to be reconstructed. We assume that the unknown graph $\mathcal{G}_5 \subset \mathcal{G}_5^{max}$, since multiple maximal planar network on given $n_{b}$ boundary nodes are possible. The number of unknowns in its Laplacian matrix $\mathcal{L}(\mathcal{G}_5)$ is reduced from 10 to 9, thereby reducing the degree and size of polynomials, constructed in above examples. Incorporating a maximum degree constraint on nodes within $\mathcal{P}_{\mathbf{G},i}^{\star}$ yields a more refined set of solutions.
\subsection{Constructing set $\widehat{\mathcal{C}}_{\mathcal{K}}$}
The developed algorithm for constructing a set $\widehat{\mathcal{C}}_{\mathcal{K}}$ serves as a method to explore the internal structure of an unknown electrical network, since elements of set $\widehat{\mathcal{C}}_{\mathcal{K}}$ in general depends on sign of the boundary conditions  $f_1=e_{kl}^T\mathcal{L}_{\mathcal{G}}^{\dagger}e_{jm}$, $f_2=e_{jk}^T\mathcal{L}_{\mathcal{G}}^{\dagger}e_{ml}$, where the condition $f_1>0$ implies the existence of disjoint paths from $k$ to $j$ and $l$ to $m$, and $f_2>0$ indicates disjoint paths from $j$ to $m$ and other from $k$ to $l$. 
However, Constructing $\widehat{\mathcal{C}}_{\mathcal{K}}$ involves repeatedly formulating augmented problems and checking whether a solution exists or not, making the process computationally resource intensive and time consuming. Therefore a more efficient approach to construct set $\widehat{\mathcal{C}}_{\mathcal{K}}$ is a subject of future research.

To reduce the computational burden associated with computing the set $\widehat{\mathcal{C}}_{\mathcal{K}}$, practitioner's can utilize apriori knowledge of disjoint paths to construct the set $\widehat{\mathcal{C}}_{\mathcal{K}}$ directly, avoiding the need to solve the augmented problems. If, only partial information about the disjoint paths is available, it can be incorporated into the intermediate set $\widehat{\mathcal{K}}$ and then construct set $\widehat{\mathcal{C}}_{\mathcal{K}}$. Conversely, if practitioners aims to build a network with specific disjoint paths, this requirement can be conveniently incorporated into the algorithm thereby reducing the computational complexity. 
\section{Conclusion}
In this paper, we have presented a novel algorithm for topology reconstruction and characterization of a set of all electrical networks that are consistent with the limitedly available Thevenin's impedance measurements computed experimentally from an unknown circular planar passive electrical network (R, RL, or RC). The unknown network is assumed to belong to a class of circular planar passive networks with equal edge impedances having equal magnitude real and imaginary parts. The developed algorithm leverages a derived relationship between Thevenin impedances and the unknown Laplacian matrix to formulate a system of multivariate polynomial equations. To find solutions efficiently, Gröbner basis polynomials are employed to solve the system, subject to the triangle and Kalmanson's inequalities, along with a graph connectedness constraint.

A key challenge in applying Kalmanson's inequalities is the unavailability of certain boundary terminals, which prevents direct computation of necessary boundary conditions. To overcome this, we introduce a method to construct a set of valid Kalmanson's inequalities $\widehat{\mathcal{C}}_{\mathcal{K}}$, by solving augmented problems $\mathcal{P}^{w}_{aug,i}$ and auxiliary problems $\mathcal{P}^{aux}_{aug,i}$.

Furthermore, in this article we prove that the Kalmanson's and the triangle inequalities holds for general circular planar passive R, RL, and RC networks, provided the Thevenin's impedances lie within a convex cone and specific boundary conditions are met. This result establishes the theoretical foundation for our algorithm and extends the applicability of these inequalities.

While the developed algorithm demonstrates promising results for the specific class of networks considered, future work will focus on extending its applicability to general R, RL, and RC networks and investigating a computationally efficient method for identifying set $\widehat{\mathcal{C}}_{\mathcal{K}}$ is needed. Additionally, we aim to incorporate the handling of noisy measurements into the algorithm to enhance its practicality. 
\appendix
\subsection{Triangle Inequality}\label{App:A}
\begin{center}
	\textcolor{black}{\textbf{Proof for Theorem-\ref{thm:tri_ineq_RL_RC1}}}\\
\end{center}
\begin{proof}
	Consider three boundary nodes $j$, $k$ and $l$ such that $1\le j < k < l \le n_b$. The  Thevenin's impedance $z^{th}_{j,l}$ is defined as,
	\begin{equation}\label{eqn:tri1}
		z^{th}_{j,l}=e_{jl}^T \mathcal{L}_{\Gamma}^{\dagger}e_{jl}=\left(e_{jk}+e_{kl}\right)^T \mathcal{L}_{\Gamma}^{\dagger} \left(e_{jk}+e_{kl}\right).
	\end{equation}    
	
	Therefore, we have, 
	\begin{equation}\label{eqn:tri3}
		\begin{array}{ll}
			z^{th}_{j,l}&=e_{jk}^T \mathcal{L}_{\Gamma}^{\dagger}e_{jk} + e_{kl}^T \mathcal{L}_{\Gamma}^{\dagger}e_{kl} + 2e_{jk}^T \mathcal{L}_{\Gamma}^{\dagger}e_{kl},\\
			&=z^{th}_{j,k} + z^{th}_{k,l} + 2e_{jk}^T \mathcal{L}_{\Gamma}^{\dagger}e_{kl}=z^{th}_{j,k} + z^{th}_{k,l} + t,\\
			\left(r^{th}_{j,l},x^{th}_{j,l}\right)&= \left(r^{th}_{j,k}+r^{th}_{k,l}+\Re(t),\,x^{th}_{j,k}+x^{th}_{k,l}+\Im(t)\right).
		\end{array}
	\end{equation}
	Here $t=e_{jk}^T \mathcal{L}_{\Gamma}^{\dagger}e_{kl}$, is the potential difference across boundary nodes $j$ and $k$ when $1\angle0 $ Ampere current is fed into $k^{th}$ node and $l^{th}$ node is grounded.

	If,
	\begin{itemize}
		\item  $\Re \left( t \right) \ge 0,\,\Im \left( t \right) \ge 0,\mbox{then}\, \left(r^{th}_{j,k}+r^{th}_{k,l}, x^{th}_{j,k}+x^{th}_{k,l}\right)\preceq^\mathbf{r} \left(r^{th}_{j,l},x^{th}_{j,l}\right) \implies z^{th}_{j,k}+z^{th}_{k,l}\preceq^{\mathbf{r}}z^{th}_{j,l},$\vspace{0.1cm}
		\item $\Re \left( t \right) \ge 0,\,\Im \left( t \right) \le 0,\mbox{then} \left(r^{th}_{j,k}+r^{th}_{k,l}, x^{th}_{j,l}\right)\preceq^\mathbf{r} \left(r^{th}_{j,l},x^{th}_{j,k}+x^{th}_{k,l}\right).$\vspace{0.1cm}
		\item $\Re \left( t \right) \le 0,\,\Im \left( t \right) \ge 0,\mbox{then}\,  \left(r^{th}_{j,l}, x^{th}_{j,k}+x^{th}_{k,l}\right)\preceq^\mathbf{r} \left(r^{th}_{j,k}+r^{th}_{k,l},x^{th}_{j,l}\right).$\vspace{0.1cm}
		\item $\Re \left( t \right) \le 0,\,\Im \left( t \right) \le 0,\mbox{then}\,\left(r^{th}_{j,l},x^{th}_{j,l}\right)\preceq^\mathbf{r} \left(r^{th}_{j,k}+r^{th}_{k,l}, x^{th}_{j,k}+x^{th}_{k,l}\right)\implies z^{th}_{j,l} \preceq^{\mathbf{r}} z^{th}_{j,k}+z^{th}_{k,l}.$
	\end{itemize}
	Here $\mathbf{r}=1$ for any $\Gamma \in \Upsilon^{RL}$ and $\mathbf{r}=2$ for any $\Gamma \in \Upsilon^{RC}$. 
\end{proof}

\subsection{Triangle Inequality on $\Gamma_{\beta}$}\label{App:A1}
\begin{center}
	\textcolor{black}{\textbf{Proof for Corollary-\ref{cor6_1}}}\\
\end{center}
\begin{proof}
	We know that,
	\begin{equation}\label{eqn:tri3}
		z^{th}_{j,l}=z^{th}_{j,k} + z^{th}_{k,l} + 2e_{jk}^T \mathcal{L}_{\Gamma}^{\dagger}e_{kl}.
	\end{equation}
	Let $t=e_{jk}^T \mathcal{L}_{\Gamma}^{\dagger}e_{kl}$. Since $\Gamma \in \Gamma_{\beta}$, we write, $t = \gamma^{-1} e_{jk}^T \mathcal{L}_{\mathcal{G}}^{\dagger}e_{kl}$. The term $e_{jk}^T \mathcal{L}_{\mathcal{G}}^{\dagger}e_{kl}$ is the potential difference across boundary nodes $j$ and $k$ when $1\angle0$ Ampere current is fed into $k^{th}$ node and $l^{th}$ node is grounded, therefore $v_{jl}<v_{kl}$. Hence,
	$e_{jk}^T \mathcal{L}_{\mathcal{G}}^{\dagger}e_{kl} <0$. Let $\left(\Re(t),\Im(t)\right)$ be the ordered pair corresponding to $t$, here $|\Re(t)|=|\Im(t)|$. Pair $\left(\Re(t),\Im(t)\right) \in \mathcal{Q}_3$ for networks in $\Gamma_{\beta}^{RL}$ therefore
	\begin{equation}
		\Re(z^{th}_{j,l}) \le \Re(z^{th}_{j,k}) + \Re(z^{th}_{k,l}), 
	\end{equation}	
	is true. 
	Using a result in equation (\ref{imped_dist}), in Theorem \ref{impedance_dist} we get,
	\begin{equation}\label{eqn:corro1}
		\beta \frac{{\det  {\mathcal{L_G}\left[ {j,l} \right]} }}{{\det  {\mathcal{L_G}\left[ j,j \right]} }} \le \beta \frac{{\det  {\mathcal{L_G}\left[ {j,k} \right]} }}{{\det  {\mathcal{L_G}\left[ j,j \right]} }} + \beta \frac{{\det  {\mathcal{L_G}\left[ {k,l} \right]} }}{{\det  {\mathcal{L_G}\left[ k,k \right]} }}.
	\end{equation}  
	In equation (\ref{eqn:corro1}), $\det  {\mathcal{L_G}\left[ k,k \right]} = \mathfrak{T}(\mathcal{G}), \forall k \in \mathcal{V_B}$. Therefore, 
	\begin{equation*}\label{eqn:corro2}
		\det\mathcal{L_G}\left[j,l\right] \le \det\mathcal{L_G}\left[j,k\right] + \det\mathcal{L_G}\left[k,l\right],	
	\end{equation*} 
	holds $\forall\, \Gamma \in \Gamma_{\beta}^{RL}$. Whereas pair $\left(\Re\left(t\right),\Im\left(t\right)\right) \in \mathcal{Q}_2$ for networks in $\Gamma_{\beta}^{RC}$, therefore
	\begin{equation*}\label{eqn:corro2}
		\det\mathcal{L_G}\left[j,l\right] \le \det\mathcal{L_G}\left[j,k\right] + \det\mathcal{L_G}\left[k,l\right],	
	\end{equation*} 
	holds $\forall\, \Gamma \in \Gamma_{\beta}^{RC}$. 
\end{proof}
\subsection{Kalmanson's Inequality on $\Upsilon$}\label{App:B}
\begin{center}
	\textcolor{black}{\textbf{Proof for Theorem-\ref{thm:kalman_ineq_RL_RC21}}}\\
\end{center}
\begin{proof}	
	Let $z_{j,l}^{th} + z_{k,m}^{th}$ be written as, 
	\begin{multline*}
		z_{j,l}^{th} + z_{k,m}^{th} = \left(e_{jk}+e_{kl}\right)^T \mathcal{L}_{\Gamma}^{\dagger}\left(e_{jk}+e_{kl}\right)\\ + \left(e_{kl}+e_{lm}\right)^T \mathcal{L}_{\Gamma}^{\dagger}\left(e_{kl}+e_{lm}\right),
	\end{multline*}	
	which gives,
	\begin{equation}\label{eqn:thm5_eqn2}
		\begin{split}
			z_{j,l}^{th} + z_{k,m}^{th} &= z_{j,k}^{th} + z_{l,m}^{th} + 2z_{k,l}^{th} + 2e_{jk}^T\mathcal{L}_{\Gamma}^{\dagger}e_{kl}+ \\& 2e_{kl}^T\mathcal{L}_{\Gamma}^{\dagger}e_{lm}=z_{j,k}^{th} + z_{l,m}^{th} + 2f_1.
		\end{split}
	\end{equation}
	Where, $f_1=z_{k,l}^{th} + e_{jk}^T\mathcal{L}_{\Gamma}^{\dagger}e_{kl}+e_{kl}^T\mathcal{L}_{\Gamma}^{\dagger}e_{lm}$ and $f \in \mathbb{C}$, which can be further written as,
	\begin{equation}\label{eqn:Mdef}
		\begin{split}
			f_1&=z_{k,l}^{th} + e_{jk}^T\mathcal{L}_{\Gamma}^{\dagger}e_{kl}+e_{kl}^T\mathcal{L}_{\Gamma}^{\dagger}e_{lm},\\
			&=e_{kl}^T\mathcal{L}_{\Gamma}^{\dagger}\left(e_{kl}+e_{jk}+e_{lm}\right)=e_{kl}^T\mathcal{L}_{\Gamma}^{\dagger}e_{jm}.
		\end{split}
	\end{equation}
	From equation (\ref{eqn:Mdef}), $f_1$ is a voltage across boundary nodes $k$ and $l$ when current $1\angle0$ Ampere is fed into node $j$ and node $m$ is grounded. Now,
	\begin{itemize}
		\item If $\Re(f_1)\ge0$ \& $\Im(f_1)\ge0,\,\mbox{then}$
		\begin{multline*}
			\left(r^{th}_{j,l},x^{th}_{j,l}\right)+\left(r^{th}_{k,m},x^{th}_{k,m}\right) \preceq^{\mathbf{r}} \left(r^{th}_{j,k},x^{th}_{j,k}\right)+\left(r^{th}_{l,m},x^{th}_{l,m}\right) \\ \implies z^{th}_{j,l}+z^{th}_{k,m} \preceq^{\mathbf{r}} z^{th}_{j,k}+z^{th}_{l,m}.	
		\end{multline*}
		\item If $\Re(f_1)\ge0$ \& $\Im(f_1)\le0$,\,then
		\begin{equation*}
			\left(r^{th}_{j,l}+r^{th}_{k,m},\,x^{th}_{j,k}+x^{th}_{l,m} \right)\preceq^{\mathbf{r}} \left(r^{th}_{j,k}+r^{th}_{l,m},\,x^{th}_{j,l}+x^{th}_{k,m}\right)	
		\end{equation*} 
		\item If $\Re(f_1)\le0$ \& $\Im(f_1)\ge0$,\,then
		\begin{equation*}
			\left(r^{th}_{j,k}+r^{th}_{l,m},\,x^{th}_{j,l}+x^{th}_{k,m} \right)\preceq^{\mathbf{r}} \left(r^{th}_{j,l}+r^{th}_{k,m},\,x^{th}_{j,k}+x^{th}_{l,m}\right) 	
		\end{equation*}
		\item If $\Re(f_1)\le0$ \& $\Im(f_1)\le0,\,\mbox{then}$
		\begin{multline*}
			\left(r^{th}_{j,l},x^{th}_{j,l}\right)+\left(r^{th}_{k,m},x^{th}_{k,m}\right) \preceq^{\mathbf{r}} \left(r^{th}_{j,k},x^{th}_{j,k}\right)+\left(r^{th}_{l,m},x^{th}_{l,m}\right) \\ \implies z^{th}_{j,k}+z^{th}_{l,m} \preceq^{\mathbf{r}} z^{th}_{j,l}+z^{th}_{k,m}.	
		\end{multline*}
	\end{itemize}
	Here $\mathbf{r}=1$ for RL network and $\mathbf{r}=2$ for RC network.\\
	Second set of inequalities are derived similarly starting from equation,
	\begin{multline*}
		z_{j,l}^{th} + z_{k,m}^{th} = \left(e_{jk}+e_{kl}\right)^T \mathcal{L}_{\Gamma}^{\dagger}\left(e_{jk}+e_{kl}\right)\\ + \left(e_{jk}+e_{jm}\right)^T \mathcal{L}_{\Gamma}^{\dagger}\left(e_{kj}+e_{jm}\right).
	\end{multline*}	
	Which can be further written as,
	\begin{equation}\label{eqn:thm5_eqn21}
		z_{j,l}^{th} + z_{k,m}^{th} = z^{th}_{k,l} + z^{th}_{j,m} + 2e_{jk}\mathcal{L}_{\Gamma}^{\dagger}e_{ml}.	
	\end{equation}	
\end{proof} 
\subsection{Kalmanson's Inequality on $\Gamma_{{\beta}}$}\label{App:B1}
\begin{center}
	\textcolor{black}{\textbf{Proof for Corollary-\ref{cor:cor8}}}\\
\end{center}
\begin{proof}
	We known from equation (\ref{eqn:thm5_eqn2}) that,
	\begin{equation*}
		z_{j,l}^{th} + z_{k,m}^{th} = z_{j,k}^{th} + z_{l,m}^{th} + 2z_{k,l}^{th} + 2e_{kl}^T\mathcal{L}_{\Gamma}^{\dagger}e_{jm}	
	\end{equation*} 
	Since $\Gamma \in \Gamma_{\beta}$, we write,
	\begin{equation}
		f_1=e_{kl}^T\mathcal{L}_{\Gamma}^{\dagger}e_{jm}=\gamma^{-1}e_{kl}^T\mathcal{L}_{\mathcal{G}}^{\dagger}e_{jm}.
	\end{equation}	
	Here, term $e_{kl}^T\mathcal{L}_{\mathcal{G}}^{\dagger}e_{jm}$ is the voltage across boundary nodes $k$ and $l$ when $1\angle0$ Ampere current is sent into $j^{th}$ node and $m^{th}$ node is grounded.

	\noindent For a network $\Gamma \in \Gamma_{\beta}^{RL}$, 
	\begin{itemize}
		\item if $e_{kl}^T\mathcal{L}_{\mathcal{G}}^{\dagger}e_{jm}>0$ then $f_1 \in \mathcal{Q}_1$. Therefore, $\Re \left( {z_{j,k}^{th}} \right) + \Re \left( {z_{l,m}^{th}} \right) \le \Re \left( {z_{j,l}^{th}} \right) + \Re \left( {z_{k,m}^{th}} \right).\,\textrm{Using relation in equation (\ref{imped_dist}), we get,}\\
		\det \left( {{\mathcal{L_G}}\left[ {j,k} \right]} \right) + \det \left( {{{\mathcal{L_G}}}\left[ {l,m} \right]} \right) \le\det \left( {{{\mathcal{L_G}}}\left[ {j,l} \right]} \right) + \\
		{\rm{                                                       }}\quad \quad \quad \quad \quad \quad \quad \quad \quad \quad \quad \quad \quad \quad \det \left( {{\mathcal{L_G}}\left[ {k,m} \right]} \right)$
		\item If $e_{kl}^T\mathcal{L}_{\mathcal{G}}^{\dagger}e_{jm}<0$ then $f_1 \in \mathcal{Q}_3$. Therefore, $\Re\left(z_{j,l}^{th}\right) + \Re\left(z_{k,m}^{th}\right) \le \Re\left(z_{j,k}^{th}\right) + \Re\left(z_{l,m}^{th}\right).\,\mbox{Using relation in equation (\ref{imped_dist}), we get,}\\ \det \left( {{\mathcal{L_G}}\left[ {j,l} \right]} \right) + \det \left( {{{\mathcal{L_G}}}\left[ {k,m} \right]} \right) \le\det \left( {{{\mathcal{L_G}}}\left[ {j,k} \right]} \right) + \\
		{\rm{                                                       }}\quad \quad \quad \quad \quad \quad \quad \quad \quad \quad \quad \quad \quad \quad \det \left( {{\mathcal{L_G}}\left[ {l,m} \right]} \right)$
	\end{itemize}
	Now, consider equation (\ref{eqn:thm5_eqn21}), 	
	\begin{equation*}
		z_{j,l}^{th} + z_{k,m}^{th} = z_{k,l}^{th} + z_{j,m}^{th} + 2e_{jk}^T\mathcal{L}_{\Gamma}^{\dagger}e_{ml} 
	\end{equation*}	
	we write,
	\begin{equation}
		f_2=e_{jk}^T\mathcal{L}_{\Gamma}^{\dagger}e_{ml}=\gamma^{-1}e_{jk}^T\mathcal{L}_{\mathcal{G}}^{\dagger}e_{ml}.
	\end{equation}	
	Here, term $e_{jk}^T\mathcal{L}_{\mathcal{G}}^{\dagger}e_{ml}$ is the voltage across boundary nodes $j$ and $k$ when $1\angle0$ Ampere current is sent into $m^{th}$ node and $l^{th}$ node is grounded.
	\begin{itemize}
		\item if $e_{jk}^T\mathcal{L}_{\mathcal{G}}^{\dagger}e_{ml}>0$ then $f_2 \in \mathcal{Q}_1 $. Therefore, $\Re \left( {z_{k,l}^{th}} \right) + \Re \left( {z_{j,m}^{th}} \right) \le \Re \left( {z_{j,l}^{th}} \right) + \Re \left( {z_{k,m}^{th}} \right)\,
		\implies \det \left( {{\mathcal{L_G}}\left[ {k,l} \right]} \right) + \det \left( {{{\mathcal{L_G}}}\left[ {j,m} \right]} \right) \le\det \left( {{{\mathcal{L_G}}}\left[ {j,l} \right]} \right) +  \det \left( {{\mathcal{L_G}}\left[ {k,m} \right]} \right)$
		\vspace{0.1cm}
		\item if $e_{jk}^T\mathcal{L}_{\mathcal{G}}^{\dagger}e_{ml}<0$ then $f_2 \in \mathcal{Q}_3 $. Therefore, $\Re \left( {z_{j,l}^{th}} \right) + \Re \left( {z_{k,m}^{th}} \right) \le \Re \left( {z_{k,l}^{th}} \right) + \Re \left( {z_{j,m}^{th}} \right)
		\implies \det \left( {{\mathcal{L_G}}\left[ {j,l} \right]} \right) + \det \left( {{{\mathcal{L_G}}}\left[ {k,m} \right]} \right) \le\det \left( {{{\mathcal{L_G}}}\left[ {k,l} \right]} \right) +  \det \left( {{\mathcal{L_G}}\left[ {j,m} \right]} \right).$
	\end{itemize} 
	Similarly, using equation (\ref{eqn:thm5_eqn2}) and equation (\ref{eqn:thm5_eqn21}), we can derive similar set of inequalities for any network $\Gamma \in \Gamma^{RC}_{\beta}$. 
\end{proof}
\subsection{Alternate Proof for Kalmanson's Inequality on $\Gamma^R_{{\beta}}$}\label{App:B2}
\begin{center}
	\textcolor{black}{\textbf{Proof for Corollary-\ref{triangle_gamma_beta}}}\\
\end{center}
\begin{proof}
	Consider	equation (\ref{eqn:thm5_eqn2}),
	\begin{equation*}
		z_{j,l}^{th} + z_{k,m}^{th} =z_{j,k}^{th} + z_{l,m}^{th} + 2e_{kl}^T\mathcal{L}_{\Gamma}^{\dagger}e_{jm}.
	\end{equation*}
	For $\Gamma \in \Gamma^{R}_{\beta}$, we write above equation as,
	\begin{equation}\label{eqn45}
		r_{j,l}^{th} + r_{k,m}^{th} =r_{j,k}^{th} + r_{l,m}^{th} + 2\beta e_{kl}^T\mathcal{L}_{\mathcal{G}}^{\dagger}e_{jm}.
	\end{equation}
	Let $f_1=e_{kl}^T\mathcal{L}_{\mathcal{G}}^{\dagger}e_{jm}$. To prove $f_1>0$, consider $\Lambda_{\mathcal{G}}\in \mathbb{R}^{4 \times 4}$ be the Kron reduced matrix corresponding to $\mathcal{L}_{\mathcal{G}}$. The Kron reduction is done with respect to nodes $j,k,l,m$. Therefore, $e_{kl}^T\mathcal{L}_{\mathcal{G}}^{\dagger}e_{jm}=e_{kl}^T\Lambda_{\mathcal{G}}^{\dagger}e_{jm}$. Since we assume that the network is connected, the Kron reduced network, labeled as $\mathcal{G}^{{kron}}$ is as given in Figure \ref{fig1:Gkron} below,
	\begin{figure}[H]
		\centering
		\includegraphics[scale=0.35]{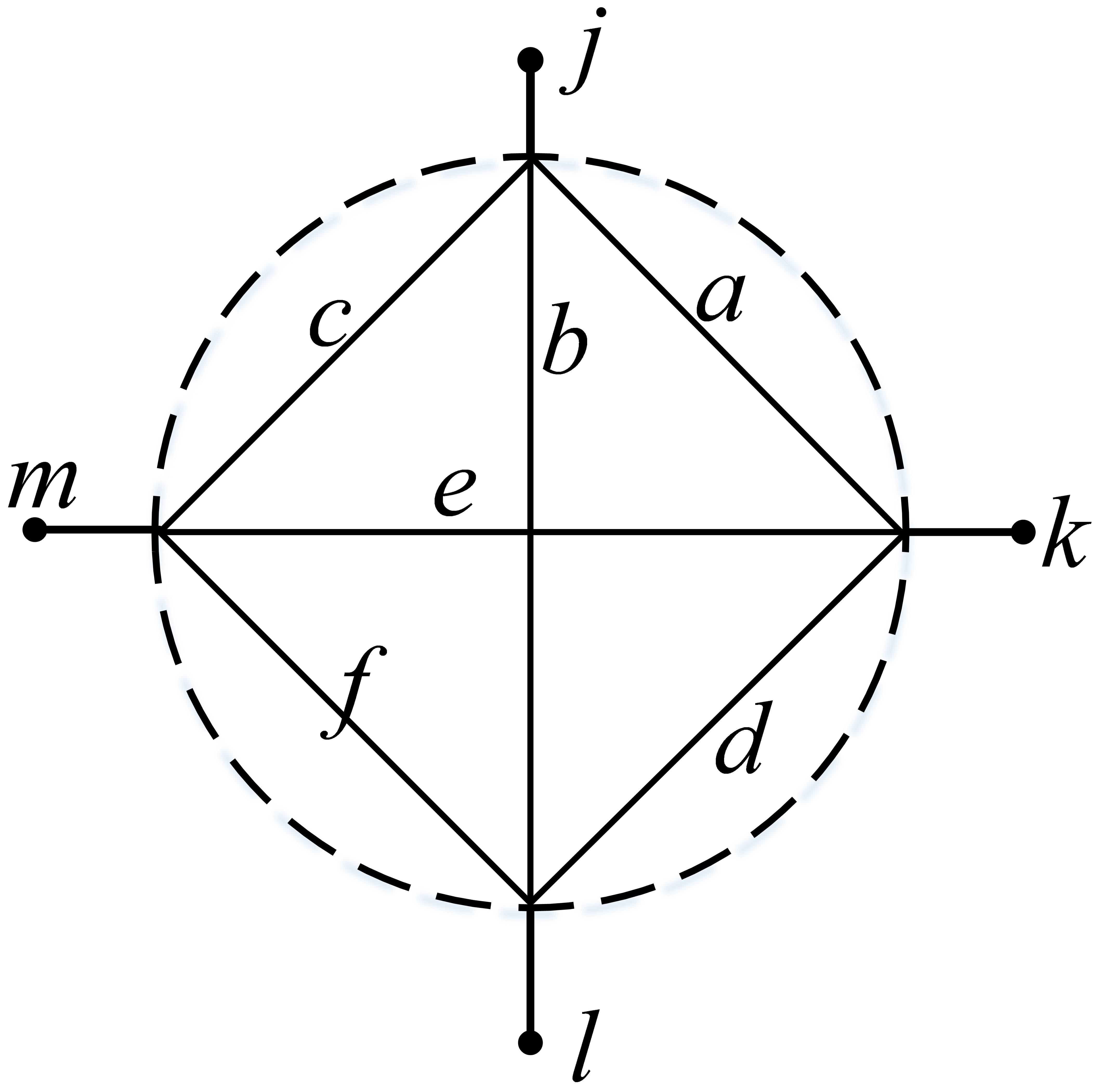}
		\caption{Kron reduced network $\mathcal{G}^{kron}$.}
		\label{fig1:Gkron}
	\end{figure}
	Here, $a,b,c,d,e,f$ are the edge resistances in $\mathcal{G}^{kron}$. The matrix $\Lambda_{\mathcal{G}}$ has the following general form,
	\begin{equation}
		{\Lambda  = \left[ {\begin{array}{*{20}{c}}
					\Lambda_{11}&{ - \frac{1}{{{a}}}}&{ - \frac{1}{{{b}}}}&{ - \frac{1}{{{c}}}}\\
					{- \frac{1}{{a}}}&\Lambda_{22}&{ - \frac{1}{{{d}}}}&{ - \frac{1}{{{e}}}}\\
					{ - \frac{1}{{{b}}}}&{ - \frac{1}{{{d}}}}&\Lambda_{33}&{ - \frac{1}{{{f}}}}\\
					{ - \frac{1}{{{c}}}}&{ - \frac{1}{{e}}}&{ - \frac{1}{{{f}}}}&\Lambda_{44}
			\end{array}} \right].}
	\end{equation}
	Therefore, the term $f_1$ computed using $\Lambda_{\mathcal{G}}$ is, 
	\begin{equation}
		f_1 = \frac{{cd\left( {be - af} \right)}}{\begin{array}{l}
				abd + abe + acd + abf + ace + bcd + acf + \\
				bce + adf + bcf + bde + aef + cde + bef + \\
				cdf + def
		\end{array}}.
	\end{equation}
	\noindent Now, let $\mathcal{P}=(k,l)$ and $\mathcal{Q}=(j,m)$ be two sequence of boundary nodes. Since, there exist two disjoint paths, one path connecting node $k$ to $j$ and, other path connecting $l$ to $m$, we then say $\mathcal{P}$ and $\mathcal{Q}$ are $2$-connected. Hence, from Theorem 3.13 in \cite{curtis2000inverse} the term $\left( {be - af} \right)>0$.
	\begin{equation}\label{rzineq}
		\left( {be - af} \right)>0 \Rightarrow f_1 > 0. 
	\end{equation} 
	Therefore, we write equation (\ref{eqn45}) as,
	\begin{equation}\label{eqn49}
		r_{j,l}^{th} + r_{k,m}^{th} \ge r_{j,k}^{th} + r_{l,m}^{th}.
	\end{equation}
	Using equation (\ref{imped_dist}) in equation (\ref{eqn45}), we get,
	$$\det\mathcal{L_G}\left[j,k\right] + \det\mathcal{L_G}\left[l,m\right] \le \det\mathcal{L_G}\left[j,l\right] + \det\mathcal{L_G}\left[k,m\right].$$

	\noindent Similarly other inequality is obtained, starting from equation (\ref{eqn:thm5_eqn21}). 
\end{proof}
%\section*{References}
\bibliographystyle{IEEEtran}
\bibliography{refer}

% Generated by IEEEtran.bst, version: 1.14 (2015/08/26)
\begin{thebibliography}{10}
\providecommand{\url}[1]{#1}
\csname url@samestyle\endcsname
\providecommand{\newblock}{\relax}
\providecommand{\bibinfo}[2]{#2}
\providecommand{\BIBentrySTDinterwordspacing}{\spaceskip=0pt\relax}
\providecommand{\BIBentryALTinterwordstretchfactor}{4}
\providecommand{\BIBentryALTinterwordspacing}{\spaceskip=\fontdimen2\font plus
\BIBentryALTinterwordstretchfactor\fontdimen3\font minus
  \fontdimen4\font\relax}
\providecommand{\BIBforeignlanguage}[2]{{%
\expandafter\ifx\csname l@#1\endcsname\relax
\typeout{** WARNING: IEEEtran.bst: No hyphenation pattern has been}%
\typeout{** loaded for the language `#1'. Using the pattern for}%
\typeout{** the default language instead.}%
\else
\language=\csname l@#1\endcsname
\fi
#2}}
\providecommand{\BIBdecl}{\relax}
\BIBdecl

\bibitem{jirkuu2019resistor}
J.~JIRKU and J.~Vilhelm, ``Resistor network as modeling tool for fracture
  detection in crystalline rocks.'' \emph{Acta Geodynamica et Geomaterialia},
  vol.~16, no.~4, 2019.

\bibitem{lundstrom2022resistor}
F.~Lundstr{\"o}m, K.~Frogner, and M.~Andersson, ``A resistor network model for
  analysis of current and temperature distribution in carbon fibre reinforced
  polymers during induction heating,'' \emph{Journal of Composite Materials},
  vol.~56, no.~20, pp. 3159--3183, 2022.

\bibitem{veldman2015towards}
F.~Veldman-de Roo, A.~Tejada, H.~van Waarde, and H.~L. Trentelman, ``Towards
  observer-based fault detection and isolation for branched water distribution
  networks without cycles,'' in \emph{2015 European Control Conference
  (ECC)}.\hskip 1em plus 0.5em minus 0.4em\relax IEEE, 2015, pp. 3280--3285.

\bibitem{zhao2023measuring}
Y.~Zhao, C.~K. Khaw, and Y.~Wang, ``Measuring a soft resistive strain sensor
  array by solving the resistor network inverse problem,'' in \emph{2023 IEEE
  International Conference on Soft Robotics (RoboSoft)}.\hskip 1em plus 0.5em
  minus 0.4em\relax IEEE, 2023, pp. 1--7.

\bibitem{cheianov2007random}
V.~V. Cheianov, V.~I. Fal’ko, B.~L. Altshuler, and I.~L. Aleiner, ``Random
  resistor network model of minimal conductivity in graphene,'' \emph{Physical
  review letters}, vol.~99, no.~17, p. 176801, 2007.

\bibitem{silva2019schsim}
A.~G. Silva, D.~G. Goroso, and R.~R. Silva, ``Schsim: A simulator of elastic
  arterial vessels using windkessel models,'' in \emph{Latin American
  Conference on Biomedical Engineering}.\hskip 1em plus 0.5em minus 0.4em\relax
  Springer, 2019, pp. 709--717.

\bibitem{zhao2015dynamic}
G.~Zhao, N.~Alujevi{\'c}, B.~Depraetere, and P.~Sas, ``Dynamic analysis and
  H2 optimisation of a piezo-based tuned vibration absorber,'' \emph{Journal
  of intelligent material systems and structures}, vol.~26, no.~15, pp.
  1995--2010, 2015.

\bibitem{rao2022mechanism}
P.~Rao, P.~Ouyang, S.~Nimbalkar, Q.~Chen, J.~Cui, and Z.~Wu, ``Mechanism
  analysis of rock failure process under high-voltage electropulse: analytical
  solution and simulation,'' \emph{Materials}, vol.~15, no.~6, p. 2188, 2022.

\bibitem{nguenouho2017ceramic}
B.~Nguenouho and O.~Sandrine, ``Ceramic coaxial resonator filter in a cubesat
  system,'' Ph.D. dissertation, Cape Peninsula University of Technology, 2017.

\bibitem{curtis1994finding}
E.~Curtis, E.~Mooers, and J.~Morrow, ``Finding the conductors in circular
  networks from boundary measurements,'' \emph{ESAIM: Mathematical Modelling
  and Numerical Analysis}, vol.~28, no.~7, pp. 781--814, 1994.

\bibitem{curtis1990determining}
E.~B. Curtis and J.~A. Morrow, ``Determining the resistors in a network,''
  \emph{SIAM Journal on Applied Mathematics}, vol.~50, no.~3, pp. 918--930,
  1990.

\bibitem{forcey2020phylogenetic}
S.~Forcey and D.~Scalzo, ``Phylogenetic networks as circuits with resistance
  distance,'' \emph{Frontiers in Genetics}, vol.~11, p. 586664, 2020.

\bibitem{calzavara2021structured}
G.~Calzavara, L.~Consolini, and J.~Kavaja, ``Structured identification for
  network reconstruction of rc-models,'' \emph{Systems \& Control Letters},
  vol. 147, p. 104849, 2021.

\bibitem{biradar2023topology}
S.~Biradar and D.~U. Patil, ``Topology reconstruction of a circular planar
  resistor network,'' in \emph{2023 European Control Conference (ECC)}.\hskip
  1em plus 0.5em minus 0.4em\relax IEEE, 2023, pp. 1--6.

\bibitem{biradar2022topology}
------, ``Topology reconstruction of a resistive network with limited boundary
  measurements,'' in \emph{2022 Eighth Indian Control Conference (ICC)}.\hskip
  1em plus 0.5em minus 0.4em\relax IEEE, 2022, pp. 379--384.

\bibitem{ghosh2008minimizing}
A.~Ghosh, S.~Boyd, and A.~Saberi, ``Minimizing effective resistance of a
  graph,'' \emph{SIAM review}, vol.~50, no.~1, pp. 37--66, 2008.

\bibitem{babic2002resistance}
D.~Babi{\'c}, D.~J. Klein, I.~Lukovits, S.~Nikoli{\'c}, and N.~Trinajsti{\'c},
  ``Resistance-distance matrix: a computational algorithm and its
  application,'' \emph{International Journal of Quantum Chemistry}, vol.~90,
  no.~1, pp. 166--176, 2002.

\bibitem{choi2019resistance}
M.~C. Choi, ``On resistance distance of markov chain and its sum rules,''
  \emph{Linear Algebra and its Applications}, vol. 571, pp. 14--25, 2019.

\bibitem{demidenko1997note}
V.~M. Demidenko and R.~Rudolf, ``A note on kalmanson matrices,''
  \emph{Optimization}, vol.~40, no.~3, pp. 285--294, 1997.

\bibitem{cox1997ideals}
D.~Cox, J.~Little, D.~O'shea, and M.~Sweedler, \emph{Ideals, varieties, and
  algorithms}.\hskip 1em plus 0.5em minus 0.4em\relax Springer, 1997, vol.~3.

\bibitem{buchberger2001grobner}
B.~Buchberger, ``Gr{\"o}bner bases: A short introduction for systems
  theorists,'' in \emph{International Conference on Computer Aided Systems
  Theory}.\hskip 1em plus 0.5em minus 0.4em\relax Springer, 2001, pp. 1--19.

\bibitem{matlab2012matlab}
S.~Matlab, ``Matlab,'' \emph{The MathWorks, Natick, MA}, vol.~9, 2012.

\bibitem{garcia1980number}
C.~Garcia and T.-Y. Li, ``On the number of solutions to polynomial systems of
  equations,'' \emph{SIAM Journal on Numerical Analysis}, vol.~17, no.~4, pp.
  540--546, 1980.

\bibitem{bernshtein1975number}
D.~N. Bernshtein, ``The number of roots of a system of equations,''
  \emph{Functional Analysis and its applications}, vol.~9, no.~3, pp. 183--185,
  1975.

\bibitem{lee2011mixed}
T.-L. Lee and T.-Y. Li, ``Mixed volume computation in solving polynomial
  systems,'' \emph{Contemp. Math}, vol. 556, pp. 97--112, 2011.

\bibitem{bardet2015complexity}
M.~Bardet, J.-C. Faug{\`e}re, and B.~Salvy, ``On the complexity of the f5
  gr{\"o}bner basis algorithm,'' \emph{Journal of Symbolic Computation},
  vol.~70, pp. 49--70, 2015.

\bibitem{cox1994ideals}
D.~Cox, J.~Little, D.~O'Shea, and M.~Sweedler, ``Ideals, varieties, and
  algorithms,'' \emph{American Mathematical Monthly}, vol. 101, no.~6, pp.
  582--586, 1994.

\bibitem{curtis2000inverse}
E.~B. Curtis and J.~A. Morrow, \emph{Inverse problems for electrical
  networks}.\hskip 1em plus 0.5em minus 0.4em\relax World Scientific, 2000,
  vol.~13.

\end{thebibliography}

\end{document}